\documentclass[letterpaper]{article}
\usepackage{ijcai16}
\usepackage{times}
\usepackage{helvet}
\usepackage{courier}
\frenchspacing
\setlength{\pdfpagewidth}{8.5in}
\setlength{\pdfpageheight}{11in}

\setcounter{secnumdepth}{2}

\usepackage{tikz}
\usetikzlibrary{arrows}
\usepackage{centernot}
\usetikzlibrary{decorations.pathreplacing}
\usetikzlibrary{patterns}
\usepackage[boxed]{algorithm}
\usepackage[noend]{algorithmic}
\usepackage[centertags,fleqn]{amsmath}
\usepackage{amssymb,amsthm,xspace}
\usepackage{booktabs}
\usepackage{nicefrac}
\usepackage{tabularx}

\usepackage{paralist}

\usepackage{comment}
\usepackage{enumitem}

\newtheorem{definition}{Definition}[section]

\newtheorem{theorem}[definition]{Theorem}

\newtheorem{lemma}[definition]{Lemma}
\newtheorem{proposition}[definition]{Proposition}
\newtheorem{corollary}[definition]{Corollary}
\newtheorem{example}[definition]{Example}

\newtheorem{remark}[definition]{Remark}
\newtheorem{algo}[definition]{Algorithm}

\algsetup{linenodelimiter=\,}
\algsetup{linenosize=\tiny}
\algsetup{indent=2em}

			\newcommand{\citet}[1]{\citeauthor{#1}~\shortcite{#1}}
			\newcommand{\citep}{\cite}

 \begin{document}

\title{Approximation Algorithms for Max-Min Share Allocations\\ of Indivisible Chores and Goods}

\author{
	Haris Aziz\\
 NICTA and UNSW,
Sydney, Australia\\
\texttt{\small haris.aziz@data61.csiro.au}
 \And Gerhard Rauchecker\\
University of Regensburg\\
\texttt{\small gerhard.rauchecker@wiwi.uni-regensburg.de }
\AND
Guido Schryen\\
University of Regensburg\\
\texttt{\small guido.schryen@wiwi.uni-regensburg.de }
\And
Toby Walsh\\
 NICTA and UNSW,
Sydney, Australia\\
\texttt{\small toby.walsh@data61.csiro.au}\\
}

\maketitle
\begin{abstract}

We consider Max-min Share (MmS) allocations of items both in the case where items are goods (positive utility) and when they are chores (negative utility). We show that fair allocations of goods and chores have some fundamental connections but differences as well. We prove that like in the case for goods, an MmS allocation does not need to exist for chores and computing an MmS allocation - if it exists - is strongly NP-hard. In view of these non-existence and complexity results,
we present a polynomial-time 2-approximation algorithm for MmS fairness for chores. We then introduce a new fairness concept called optimal MmS that represents the best possible allocation in terms of MmS that is guaranteed to exist. For both goods and chores, we use connections to parallel machine scheduling to give (1) an exponential-time exact algorithm and (2) a polynomial-time approximation scheme for computing an optimal MmS allocation when the number of agents is fixed.
\end{abstract}

\section{Introduction}
{The fair allocation of indivisible items is a central problem in economics, computer science, and operations research~\citep{AGMW15a,BrTa96a,BCM15a,LMMS04a}.}
We focus on the setting in which we have a set of $N$ agents and a set of items with each agent expressing utilities over the items. The goal is to allocate the items among the agents in a fair manner without allowing transfer of money. If all agents have positive utilities for the items, we view the items as goods. On the other hand, if all agents have negative utilities for the items, we can view the items as chores. Throughout, we assume that all agents' utilities over items are additive. 

In order to identify fair allocations, one needs to formalize what
fairness means. A compelling fairness concept called \emph{Max-min
  Share (MmS)} was {recently} introduced which is weaker than traditional fairness concepts such as envy-freeness and proportionality~\citep{BoLe14a,Budi11a}. An agent's MmS is the ``most preferred bundle he could guarantee himself as a divider in divide-and-choose against adversarial opponents''~\citep{Budi11a}.
The main idea is that an agent partitions the items into $N$ sets in a way that maximizes the utility of the least preferred set in the partition. The utility of the least preferred set is called the \emph{MmS guarantee} of the agent.
An allocation satisfies \emph{MmS fairness} if each agent gets at least as much utility as her MmS guarantee. We refer to such an allocation as \emph{MmS allocation}.\footnote{Bouveret and Lema{\^\i}tre~\citep{BoLe14a} and Budish~\citep{Budi11a} also formalized a fairness concept called min-Max fairness that is stronger than Max-min fairness. }

Although MmS is a highly attractive fairness concept and a natural weakening of proportionality and envy-freeness~\citep{BoLe14a,Bole15a}, \citet{PrWa14a} showed that an MmS allocation of goods does not exist in general. This fact initiated research on approximate MmS allocations of goods in which each agents gets some fraction of her MmS guarantee.
On the positive side, not only do MmS allocations of goods exist for most instances~\citep{KPW16a}, but there also exists a polynomial-time algorithm that returns a 2/3-approximate MmS allocation~\citep{PrWa14a,AMNS15a}. Algorithms for computing MmS allocations of goods have been deployed and are being used for fair division~\citep{GoPr14a}. 

{In this paper, we turn to MmS allocations for chores, a subject which has not been
studied previously. Even in the more general domain of fair
allocation, there is a paucity of research on chore allocation compared
to goods despite there being many settings where we have chores not goods}~\citep{Caragiannis2012}. In general, the problem of chore allocation cannot be transformed into a problem for goods allocation~\citep{PeSu98a}. 

%
\vspace{-1em}
\paragraph{Contributions}
We consider MmS allocation of chores for the first time and present some fundamental connections between MmS allocation of goods and chores especially when the positive utilities of the agents in the case of goods are negated to obtain a chores setting. We also show that there are differences between the two settings with no known reductions between the settings. In particular, reductions such as negating the utility values and applying an algorithm for one setting does not give an algorithm for other setting. 

We show that an MmS allocation does not need to exist for chores. 
In view of the non-existence results, we introduce a new concept called \textit{optimal MmS} for both goods and chores. An allocation is an \emph{optimal MmS allocation} if it represents the best possible approximation of the MmS guarantee. An optimal MmS allocation has two desirable properties: (1) it always exists and (2) it satisfies MmS fairness whenever an MmS allocation exists (see Figure~\ref{fig:fairness-relations}). Consequently, optimal MmS is a compelling fairness concept and a conceptual contribution of the paper. 

We present bounds to quantify the gap between optimal MmS fairness and
MmS fairness. 
For chores, we present a linear-time round-robin algorithm for
this purpose that provides a 2-approximation for MmS. We show that the bound proved is \emph{tight} for the round robin algorithm. 
We also show that, as in the case of goods, the computation of an MmS allocation for chores is strongly NP-hard and so is the computation of an optimal MmS allocation. 

In view of the computational hardness results, we develop approximation algorithms for {optimal MmS
  fairness}. For both goods and chores, we use connections to parallel
machine scheduling  {and some well established
scheduling algorithms to derive an exponential-time exact algorithm
and a PTAS (polynomial-time approximation scheme) when the number of agents is fixed.} These are the first PTAS results related to MMS. 
As long as an MMS allocation exists (that does exist in most instances as shown analytically by \citet{KPW16a} and experimentally by \citet{BoLe14a}), our algorithm for goods also provides a PTAS for standard MMS which in terms of approximation factor is a significant improvement over previous constant-factor approximation results~\citep{PrWa14a}. 


	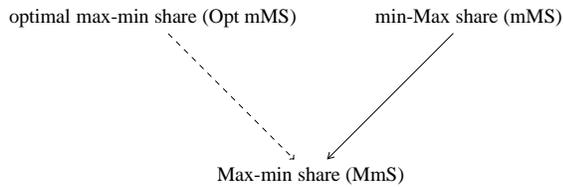
\begin{figure}[h!]

		\begin{center}
		\scalebox{0.7}{
		\begin{tikzpicture}
		\tikzstyle{pfeil}=[->,>=angle 60, shorten >=1pt,draw]
        \tikzstyle{dotted}=[dashed,->,shorten >=1pt]
		\tikzstyle{onlytext}=[]


		\node[onlytext] (MmS) at (0,0) {\large Max-min share (MmS)};
        		\node[onlytext] (optimalMmS) at (-3,3) {\large optimal max-min share (Opt mMS)};
		\node[onlytext] (mMS) at (3,3) {\large min-Max share (mMS)};
          \draw[dotted] (optimalMmS) to (MmS);
               \draw[pfeil] (mMS) to (MmS);

		\end{tikzpicture}

		}

		\end{center}	\caption{Relations between fairness concepts. mMS fairness implies MmS fairness. Optimal MmS fairness implies MmS fairness if an MmS allocation exists.}
		\label{fig:fairness-relations}
	\end{figure}

\section{Related Work}
In addition to the literature on MmS allocations for goods discussed in the introduction, our work is based on parallel machine scheduling theory. There is a natural connection between MmS allocations and parallel machine scheduling, which we outline later. This connection turns out to be very fruitful for both exact and approximate computations of optimal MmS allocations. We briefly introduce the concept of parallel machine scheduling in the following.

We have a set $\mathcal{M}$ of jobs and a set $[N]$ of $N$ machines. Each of the jobs has to be processed exactly once on exactly one machine without preemption. Furthermore, we have a processing time matrix $P=(p_{ij})_{i,j}$ where $p_{ij}\ge0$ indicates how long machine $i$ requires to finish job $j$.
If there are no further restrictions on the values of $P$, we deal with unrelated parallel machines. If $p_{ij}=p_{i'j}$ for all $i,i'\in[N]$ and $j\in\mathcal{M}$ then machines are considered identical. 

The goal of each machine scheduling problem is to find a schedule (i.e., an ordered allocation) that optimizes a certain objective function. The problems that we focus on in this paper either minimize the time where the latest machine finishes (this is also called the makespan of a schedule) or maximize the time where the earliest machine finishes. 
We show that the former objective is related to MmS allocation of chores whereas the latter is related to Mms allocation of goods.  
An extensive overview on all important machine scheduling problems is provided by \citet{Pinedo2012}.

~\citet{Graham1979} established a notation for machine scheduling problems where $P$ stands for identical machines, $R$ for unrelated machines, $C_{max}$ for minimizing the latest machine's finishing time, and $C_{min}$ for maximizing the earliest machine's finishing time.
According to this notation, we will use the problems $P/C_{max}$, $R/C_{max}$, $P/C_{min}$, and $R/C_{min}$ in this paper. The latter problem is also equivalent to maximizing egalitarian welfare under additive utilities~\citep{AsSa10a,BaSv06a}. All of these problems are NP-hard in the strong sense but they are well investigated and plenty of research has been conducted on approximation algorithms which we will take advantage of \citep{Efraimidis2006,Hochbaum1987,Lenstra1990,Woeginger1997}. 
\section{Definitions}
We introduce the basic notation and definitions for our approach in this section. For a set of items $\mathcal{M}$ and a number $N\in\mathbb{N}$, let $\Pi_N(\mathcal{M})$ be the set of all $N$-partitions of $\mathcal{M}$ (i.e., item allocations) and let $\mathcal{P}(\mathcal{M})$ denote the power set of $\mathcal{M}$.
\begin{definition}
	\begin{enumerate}[topsep=0pt,itemsep=-1ex,partopsep=1ex,parsep=1ex]
		\item A \textbf{non-negative instance} is a tuple $(\mathcal{M},[N],(u_i)_{i\in [N]})$ consisting of a set of items $\mathcal{M}$, a set $[N]$ of $N$ agents, and a family of additive utility functions $(u_i:\mathcal{P}(\mathcal{M})\to\mathbb{R}_{\ge0})_{i\in[N]}$. The set of all non-negative instances is denoted by $\mathcal{I}^+$.
		\item A \textbf{non-positive instance} is a tuple $(\mathcal{M},[N],(d_i)_{i\in[N]})$ consisting of a set of items $\mathcal{M}$, a set $[N]$ of $N$ agents, and a family of additive utility functions $(d_i:\mathcal{P}(\mathcal{M})\to\mathbb{R}_{\le0})_{i\in[N]}$. The set of all non-positive instances is denoted by $\mathcal{I}^-$.
		\item An \textbf{instance} is a tuple $(\mathcal{M},[N],(v_i)_{i\in[N]})$ which is a non-negative instance or a non-positive instance. The set of all instances is denoted by $\mathcal{I}$.
	\end{enumerate}
\end{definition}

\begin{definition}
For an instance $I=(\mathcal{M},[N],(v_i)_{i\in [N]})\in\mathcal{I}$, we define the \textbf{corresponding instance} by
$$-I:=(\mathcal{M},[N],(-v_i)_{i\in [N]})\in\mathcal{I}.$$
\end{definition}

\begin{definition} Let $I=(\mathcal{M},[N],(v_i)_{i\in[N]})\in\mathcal{I}$ be an instance and $i\in[N]$ be an agent. 
	\begin{enumerate}[topsep=0pt,itemsep=-1ex,partopsep=1ex,parsep=1ex]
		\item Agent $i$'s \textbf{Max-min Share (MmS) guarantee} for $I$ is defined as
		$$MmS_{v_i}^N(\mathcal{M}):=\max_{(S_1,\ldots,S_N)\in\Pi_N(\mathcal{M})}\min_{j\in[N]} v_i(S_j).$$
		\item Agent $i$'s \textbf{min-Max Share (mMS) guarantee} for $I$ is defined as
		$$mMS_{v_i}^N(\mathcal{M}):=\min_{(S_1,\ldots,S_N)\in\Pi_N(\mathcal{M})}\max_{j\in[N]} v_i(S_j).$$
	\end{enumerate}
\end{definition}

\begin{definition}
Let $I=(\mathcal{M},[N],(v_i)_{i\in[N]})\in\mathcal{I}$ be an instance and $S=(S_1,\ldots,S_N)\in\Pi_N(\mathcal{M})$ be an allocation.
	\begin{enumerate}[topsep=0pt,itemsep=-1ex,partopsep=1ex,parsep=1ex]
		\item $S$ is called an \textbf{MmS allocation} for $I$ iff $v_i(S_i)\ge MmS_{v_i}^N(\mathcal{M})$ for all agents $i\in[N]$.
		\item $S$ is called a \textbf{perverse mMS allocation} for $I$ iff $v_i(S_i)\le mMS_{v_i}^N(\mathcal{M})$ for all agents $i\in[N]$.
	\end{enumerate}
\end{definition}

The concept of a perverse mMS allocation seems counterintuitive but turns out to be helpful to obtain results on MmS allocations for chores.


We can also relax the MmS fairness concept as follows.

\begin{definition}
	Given an instance $I=(\mathcal{M},[N],(v_i)_{i\in[N]})\in\mathcal{I}$ and a constant $\lambda\in\mathbb{R}$.
	\begin{enumerate}[topsep=0pt,itemsep=-1ex,partopsep=1ex,parsep=1ex]
		\item The \textbf{$\lambda$-max-min problem} for $I$ is about finding an allocation $(S_1,\ldots,S_N)\in\Pi_N(\mathcal{M})$ with $v_i(S_i)\ge \lambda\cdot MmS_{v_i}^N(\mathcal{M})$ for all $i\in[N]$.
		\item The \textbf{perverse $\lambda$-min-max problem} for $I$ is about finding an allocation $(S_1,\ldots,S_N)\in\Pi_N(\mathcal{M})$ with $v_i(S_i)\le \lambda\cdot mMS_{v_i}^N(\mathcal{M})$ for all $i\in[N]$.
	\end{enumerate}
\end{definition}

\section{Properties of MmS Fairness}
First, we present a fundamental connection between the allocation of chores (non-positive utilities) and goods (non-negative utilities). Later in this section, we discuss non-existence examples for MmS allocations and show that existence and non-existence examples do not transfer straightforwardly from goods to chores and vice-versa. Finally, we give a complexity result for the computation of MmS allocations for both goods and chores. 

\subsection{Fundamental Connection between Allocations of Goods and Chores}
The following result shows an interesting connection between MmS and mMS when changing signs in all utility functions.
\begin{lemma}\label{MmS=-mMS}
 Let $(\mathcal{M},[N],(v_i)_{i\in[N]})\in\mathcal{I}$ be an instance. Then we have
$-MmS_{v_i}^N(\mathcal{M}) = mMS_{-v_i}^N(\mathcal{M})$
	for all agents $i\in[N]$. 
\end{lemma}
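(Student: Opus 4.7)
The plan is to prove the identity by a direct calculation, exploiting the elementary fact that for any real-valued function $f$ on a finite set, $-\max f = \min(-f)$ and $-\min f = \max(-f)$. Since the definitions of $MmS$ and $mMS$ are built from nested $\max$/$\min$ over the same set of $N$-partitions $\Pi_N(\mathcal{M})$, negating the utility should turn each outer $\max$ into a $\min$ and each inner $\min$ into a $\max$, which is exactly the shape of $mMS_{-v_i}^N(\mathcal{M})$.

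Concretely, I would start from the definition
\[
-MmS_{v_i}^N(\mathcal{M}) = -\max_{(S_1,\ldots,S_N)\in\Pi_N(\mathcal{M})}\min_{j\in[N]} v_i(S_j),
\]
then push the negation through the outer $\max$ to obtain
\[
-MmS_{v_i}^N(\mathcal{M}) = \min_{(S_1,\ldots,S_N)\in\Pi_N(\mathcal{M})}\Bigl(-\min_{j\in[N]} v_i(S_j)\Bigr),
\]
and finally push it through the inner $\min$ to get
\[
-MmS_{v_i}^N(\mathcal{M}) = \min_{(S_1,\ldots,S_N)\in\Pi_N(\mathcal{M})}\max_{j\in[N]} (-v_i)(S_j) = mMS_{-v_i}^N(\mathcal{M}),
\]
where the last equality is just the definition of $mMS$ applied to the utility function $-v_i$. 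Since $\Pi_N(\mathcal{M})$ is finite and the utilities are real-valued, all max/min are attained and the swaps are justified without any extra hypotheses.

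There is essentially no obstacle here; the only thing to be slightly careful about is making sure the partition set over which we optimise is literally the same on both sides (it is, since $\Pi_N(\mathcal{M})$ depends only on $\mathcal{M}$ and $N$, not on the utility function), so that the two identities $-\max = \min(-\cdot)$ and $-\min = \max(-\cdot)$ can be applied verbatim. The same argument also gives the symmetric identity $-mMS_{v_i}^N(\mathcal{M}) = MmS_{-v_i}^N(\mathcal{M})$, which will presumably be useful later when translating statements between the goods and chores settings.
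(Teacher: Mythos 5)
Your proof is correct and is exactly the intended argument: the paper omits a written proof of this lemma, treating it as immediate from the definitions via the identities $-\max f = \min(-f)$ and $-\min f = \max(-f)$ applied to the nested optimization over the same finite set $\Pi_N(\mathcal{M})$, which is precisely what you carry out. Your closing remark that the symmetric identity $-mMS_{v_i}^N(\mathcal{M}) = MmS_{-v_i}^N(\mathcal{M})$ follows by the same argument is also consistent with how the paper later uses the lemma.
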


This leads us to the following result discussing the equivalence of MmS allocations for an instance and perverse mMS allocations for its corresponding instance.

\begin{proposition}
	\label{transfer_goods_chores}
	Let $I=(\mathcal{M},[N],(v_i)_{i\in[N]})\in\mathcal{I}$ be an instance and let $S=(S_1,\ldots,S_N)\in\Pi_N(\mathcal{M})$ be an allocation. 
	\begin{enumerate}[topsep=0pt,itemsep=-1ex,partopsep=1ex,parsep=1ex]
	\item $S$ is an MmS allocation for $I$ if and only if $S$ is a perverse mMS allocation for the corresponding instance $-I$.
	\item In particular, there is an MmS allocation for $I$ if and only if there is a perverse mMS allocation for the corresponding instance $-I$.
	\end{enumerate}
\end{proposition}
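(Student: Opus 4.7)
The plan is to prove part (1) directly from the definitions by chaining three equivalences, and then obtain part (2) as an immediate corollary by existentially quantifying over the allocation $S$. The key tool is Lemma~\ref{MmS=-mMS}, which lets me swap the MmS guarantee against $v_i$ for the mMS guarantee against $-v_i$.

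For part (1), I would fix an arbitrary allocation $S=(S_1,\ldots,S_N)\in\Pi_N(\mathcal{M})$ and unfold the definitions. First, $S$ is an MmS allocation for $I$ iff $v_i(S_i)\ge MmS_{v_i}^N(\mathcal{M})$ for every agent $i\in[N]$. Multiplying both sides by $-1$ (which flips the inequality) gives the equivalent condition $-v_i(S_i)\le -MmS_{v_i}^N(\mathcal{M})$ for every $i$. Now I apply Lemma~\ref{MmS=-mMS} to rewrite the right-hand side as $mMS_{-v_i}^N(\mathcal{M})$, obtaining $(-v_i)(S_i)\le mMS_{-v_i}^N(\mathcal{M})$ for every $i\in[N]$. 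By definition, this is exactly the statement that $S$ is a perverse mMS allocation for the corresponding instance $-I=(\mathcal{M},[N],(-v_i)_{i\in[N]})$. Each step is a genuine biconditional, so the chain gives the desired ``if and only if''.

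Part (2) follows by applying part (1) pointwise and quantifying over allocations: an MmS allocation for $I$ exists iff there exists some $S\in\Pi_N(\mathcal{M})$ that is an MmS allocation for $I$, iff (by part~(1)) there exists some $S\in\Pi_N(\mathcal{M})$ that is a perverse mMS allocation for $-I$, iff a perverse mMS allocation for $-I$ exists.

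There is no real obstacle here; the proof is essentially a bookkeeping exercise on definitions. The only subtlety worth being careful about is the direction of the inequality when multiplying by $-1$, and making sure the additivity/sign conventions in Definition~3 (i.e., that $-v_i$ really does define a valid utility function of the opposite sign) are applied consistently so that $-I$ lies in $\mathcal{I}$ as required. Once Lemma~\ref{MmS=-mMS} is in hand, the proof is a three-line manipulation for part (1) and a one-line quantifier shuffle for part (2).
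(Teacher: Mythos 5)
Your proposal is correct and follows essentially the same route as the paper's proof: negate the defining inequality $v_i(S_i)\ge MmS_{v_i}^N(\mathcal{M})$, invoke Lemma~\ref{MmS=-mMS} to rewrite $-MmS_{v_i}^N(\mathcal{M})$ as $mMS_{-v_i}^N(\mathcal{M})$, and recognize the result as the definition of a perverse mMS allocation for $-I$, with part (2) following by quantifying over allocations. No gaps.
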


\begin{proof}
For all $i\in[N]$, we have 
$v_i(S_i)\ge MmS_{v_i}^N(\mathcal{M})
\Leftrightarrow -v_i(S_i)\le -MmS_{v_i}^N(\mathcal{M}) 
\stackrel{\ref{MmS=-mMS}}{\Leftrightarrow} -v_i(S_i) \le mMS_{-v_i}^N(\mathcal{M})$.
which proves both claims.
\end{proof}
This fundamental connection shows also a difference between the allocation of chores and goods since finding MmS allocations and finding perverse mMS allocations involve different objectives. 


A similar statement can be made for the approximations.
\begin{proposition}
	\label{transfer_lambda}
	Let $I=(\mathcal{M},[N],(v_i)_{i\in[N]})\in\mathcal{I}$ be an instance and let $S=(S_1,\ldots,S_N)\in\Pi_N(\mathcal{M})$ be an allocation. Let $\lambda\in\mathbb{R}$ be arbitrary. Then $S$ is a solution of the $\lambda$-MmS problem for $I$ if and only if $S$ is a solution of the perverse $\lambda$-mMS problem for the corresponding instance $-I$.
\end{proposition}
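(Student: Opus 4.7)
The plan is to mimic exactly the short chain-of-equivalences proof that was just given for Proposition~\ref{transfer_goods_chores}, but with an extra factor of $\lambda$ carried through. The only ingredient besides elementary sign manipulation is Lemma~\ref{MmS=-mMS}, which converts $MmS_{v_i}^N$ into $-mMS_{-v_i}^N$.

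Concretely, I would fix an arbitrary agent $i\in[N]$ and write down the defining inequality of the $\lambda$-max-min problem, namely $v_i(S_i) \ge \lambda\cdot MmS_{v_i}^N(\mathcal{M})$. Multiplying through by $-1$ flips the inequality and yields $-v_i(S_i) \le -\lambda\cdot MmS_{v_i}^N(\mathcal{M}) = \lambda\cdot\bigl(-MmS_{v_i}^N(\mathcal{M})\bigr)$. Now I apply Lemma~\ref{MmS=-mMS} to the parenthesized expression, obtaining $-v_i(S_i)\le \lambda\cdot mMS_{-v_i}^N(\mathcal{M})$, which is exactly the defining inequality of the perverse $\lambda$-min-max problem for the corresponding instance $-I$. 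Since every step is a logical equivalence, and since we quantified over an arbitrary $i$, taking the conjunction over all agents establishes the claim in both directions simultaneously.

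The main (and essentially only) subtlety to be careful about is that $\lambda$ is allowed to be any real number, so I should not assume $\lambda\ge 0$ when moving it inside the negation. The manipulation $-\lambda\cdot x = \lambda\cdot(-x)$ is of course valid for any real $\lambda$, so the argument goes through unchanged, but it is worth writing the step explicitly so the reader sees that no sign assumption on $\lambda$ is being smuggled in. Beyond this, there is no genuine obstacle: the proposition is really a one-line corollary of Lemma~\ref{MmS=-mMS} together with the order-reversing property of negation, in the same spirit as Proposition~\ref{transfer_goods_chores}, which is the $\lambda=1$ special case.
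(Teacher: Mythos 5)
Your proof is correct and is exactly the paper's argument: the authors simply state that the proof is the same as that of Proposition~\ref{transfer_goods_chores} with an additional factor of $\lambda$ carried through, which is precisely the chain of equivalences you write out via Lemma~\ref{MmS=-mMS}. Your explicit remark that $-\lambda\cdot x=\lambda\cdot(-x)$ requires no sign assumption on $\lambda$ is a nice touch the paper leaves implicit.
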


\begin{proof} The proof is the same as the proof of Prop. \ref{transfer_goods_chores} with just an additional factor $\lambda$ on the right side of each inequality.
\end{proof}

\subsection{Non-existence and Complexity of MmS}
\citet{PrWa14a} showed that an MmS allocation for goods does not necessarily exist. We construct an instance $I=(\mathcal{M},[3],(u_i)_{i\in[3]})\in\mathcal{I}^+$ by a subtle modification of their example to obtain an analogous result for chores. Consider a set $[3]=\{1,2,3\}$ of three agents and a set of twelve items (represented by pairs) 
$$\mathcal{M}=\{(j,k)|j=1,2,3;k=1,2,3,4\}.$$ 
We define matrices
$$B=\begin{pmatrix}1&1&1&1\\1&1&1&1\\1&1&1&1\end{pmatrix},\quad O=\begin{pmatrix}17&25&12&1\\2&22&3&28\\11&0&21&23\end{pmatrix},$$
$$E^1=\begin{pmatrix}-3&1&1&1\\0&0&0&0\\0&0&0&0\end{pmatrix},\quad
E^2=\begin{pmatrix}-3&1&0&0\\1&0&0&0\\1&0&0&0\end{pmatrix},$$
$$E^3=\begin{pmatrix}-3&0&1&0\\0&0&1&0\\0&0&0&1\end{pmatrix}.$$
For each agent $i\in[3]$, we define her utility function by
$$u_i:\mathcal{M}\to\mathbb{R}_{\ge0},\quad (j,k)\mapsto 10^6\cdot B_{jk}+10^3\cdot O_{jk} + E^{i}_{jk}.$$
We obtain the following result by a careful adaption of the argument presented by \citet{PrWa14a}.

\begin{proposition}\label{non_existence_chores}
There is no MmS allocation for $-I$. In particular, an MmS allocation for chores does not need to exist.
\end{proposition}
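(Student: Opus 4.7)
The plan is to reduce to the (already understood) goods world via the machinery just proved. By Proposition~\ref{transfer_goods_chores}, $-I$ admits an MmS allocation if and only if $I$ admits a perverse mMS allocation, i.e.\ an allocation $S=(S_1,S_2,S_3)\in\Pi_3(\mathcal{M})$ such that $u_i(S_i)\le mMS_{u_i}^3(\mathcal{M})$ for every agent $i\in[3]$. So the goal reduces to showing that no such perverse mMS allocation exists for the goods instance $I$ built from $B$, $O$, and the $E^i$. The overall strategy mirrors the \citet{PrWa14a} non-existence argument for MmS, but working with the min-over-max quantity $mMS$ instead of the max-over-min quantity $MmS$.

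First I would exploit the layered structure of the utilities. Since the base matrix $B$ gives every item the dominating value $10^6$, any partition into bundles of sizes different from $(4,4,4)$ forces some bundle to weigh at least $5\cdot 10^6$, which dwarfs the entire $O$-plus-$E^i$ range (bounded by roughly $12\cdot 28\cdot 10^3$ in total). Hence for each agent $i$ the minimum-of-maxima defining $mMS_{u_i}^3(\mathcal{M})$ is attained only among balanced $(4,4,4)$ partitions, and within these the ordering is determined by the $O$-level and, at tie, by the $E^i$-level. Next I would compute the mMS value for each agent by finding, among balanced partitions, the one whose heaviest bundle under $O$ is as light as possible; the matrix $O$ is precisely the \citet{PrWa14a} construction, so the analysis of which balanced partitions minimise the max-bundle weight is inherited from their paper (up to the swap of max and min, handled by Lemma~\ref{MmS=-mMS}). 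The tiny tie-breakers $E^1,E^2,E^3$ are then used exactly as in their original argument: they single out, for each agent, essentially one specific bundle that this agent must receive if $u_i(S_i)\le mMS_{u_i}^3(\mathcal{M})$ is to hold.

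With the mMS-optimal bundles pinned down for each agent, I would finish by a short case analysis showing that the three ``forced'' bundles for agents $1,2,3$ cannot coexist in a single partition of $\mathcal{M}$: in every assignment of the agent-specific mandatory bundles to the agents, some item is either doubly allocated or left out, or some agent $i$ is forced to take a bundle whose $u_i$-value strictly exceeds $mMS_{u_i}^3(\mathcal{M})$. This contradicts the definition of a perverse mMS allocation, so no perverse mMS allocation exists for $I$, and by the reduction no MmS allocation exists for $-I$.

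The main obstacle is the bookkeeping in the tie-breaking step: one has to verify that the perturbations $E^1,E^2,E^3$ are both small enough not to disturb the balanced-partition argument driven by $B$ and $O$, and large enough (and asymmetric enough) to rule out all alternative mMS-optimal bundles for each agent. This is exactly the role played by the analogous $E^i$ in \citet{PrWa14a}, and the verification is a direct, if slightly tedious, arithmetic check on the twelve items and the finitely many balanced three-partitions compatible with the $O$-level optimum.
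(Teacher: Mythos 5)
Your proposal follows the paper's own route: reduce via Proposition~\ref{transfer_goods_chores} to the non-existence of a perverse mMS allocation for the goods instance $I$, use the $10^6$/$10^3$/unit layering to restrict attention to the three balanced label-partitions of the Procaccia--Wang matrix $O$ (each agent's mMS being $4{,}055{,}000$), and let the perturbations $E^i$ push some agent strictly above that value in every case. The one inaccurate detail is the framing of the last step --- the $E^i$ do not single out one mandatory bundle per agent (each agent has several bundles of value at most $4{,}055{,}000$); the paper instead checks directly that in each of the three candidate partitions at least one agent receives a bundle worth $4{,}055{,}001$, which is the arithmetic your sketch defers and which does go through.
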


Another interesting difference between MmS for goods and chores is the fact that existence and non-existence examples for MmS allocations cannot be simply converted into each other by just changing the signs of the utility functions. 

The only difference from $I$ to the instance of the example presented by \citet{PrWa14a} are changed signs in the $E^{i}$ matrices. Let $J=(\mathcal{M},[3],(w_i)_{i\in[3]})\in\mathcal{I}^+$ denote their instance. We get the following interesting result.
\begin{proposition}\label{non_transferability}
There is an MmS allocation for $I$ but no MmS allocation for $-I$. There is no MmS allocation for $J$ but an MmS allocation for $-J$.
\end{proposition}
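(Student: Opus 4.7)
Two of the four claims are immediate: the absence of an MmS allocation for $-I$ is exactly Proposition~\ref{non_existence_chores}, and the absence of an MmS allocation for $J$ is the main construction of \citet{PrWa14a}. So the task reduces to exhibiting MmS allocations for $I$ and for $-J$. My plan is to show that in both cases a single witness works, namely the row partition $S^\star=(S_1,S_2,S_3)$ defined by $S_j=\{(j,1),(j,2),(j,3),(j,4)\}$ for $j=1,2,3$.

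The decisive structural observation is that each of the matrices $E^1,E^2,E^3$ has all twelve entries summing to $0$ (a one-line check: $-3+1+1+1=0$ for $E^1$, and analogously for $E^2$ and $E^3$). Since every row of $B$ sums to $4$, every row of $O$ sums to $55$, and all columns of $O$ together sum to $165$, it follows that for every agent $i$ and in both $I$ and $J$ the total utility is $\sum_{(j,k)\in\mathcal{M}}v_i(j,k)=12\cdot 10^6+165\cdot 10^3+0=12165000$. By pigeonhole, any $3$-partition of $\mathcal{M}$ contains at least one bundle of value at most $4055000$, so $MmS_{u_i}^3(\mathcal{M})\le 4055000$ for every agent $i$ in $I$ and $MmS_{-w_i}^3(\mathcal{M})\le -4055000$ for every agent $i$ in $-J$.

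Next I would compute the values $S^\star$ assigns directly. For $I$ we have $u_i(S_i)=4\cdot 10^6+55\cdot 10^3+(\text{sum of row }i\text{ of }E^i)$, which evaluates to $4055000$ for agent $1$ (row-$1$ of $E^1$ sums to $0$) and to $4055001$ for agents $2$ and $3$ (their relevant row sums are $+1$). For $-J$, using $E^{i,J}=-E^i$, the analogous value is $-w_i(S_i)=-4055000-(\text{sum of row }i\text{ of }E^{i,J})$, equal to $-4055000$ for agent $1$ and $-4054999$ for agents $2$ and $3$. Comparing these with the pigeonhole upper bounds from the previous step shows that each agent receives at least her MmS guarantee, so $S^\star$ is an MmS allocation in both $I$ and $-J$.

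The only conceptually delicate point, worth a brief remark in the writeup, is why this same row partition fails for $J$: agent $2$'s MmS guarantee in $J$ is exactly $4055000$, because $E^{2,J}$ admits a $3$-partition into four-item bundles each with $O$-sum $55$ and $E^{2,J}$-sum $0$ (for instance one bundle is $\{(1,1),(1,2),(2,1),(3,1)\}$); yet her row-$2$ bundle has $E^{2,J}$-sum $-1$, missing the threshold by $1$. Under the sign flip to $-J$ the inequalities reverse and the same $-1$ pushes her strictly above the threshold. Thus the main obstacle is not any combinatorial enumeration of partitions but simply identifying the right upper bound to target; the pigeonhole bound together with a direct row-sum arithmetic check completes the argument without any search.
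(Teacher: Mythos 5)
Your proof is correct and follows essentially the same route as the paper's: both use the row partition $S_j=\{(j,k)\mid k=1,\dots,4\}$ as the single witness, compute the same bundle values $4055000$ and $4055001$ (resp.\ $-4055000$ and $-4054999$), and cite Proposition~\ref{non_existence_chores} and \citet{PrWa14a} for the two non-existence claims. The only (minor) difference is that you justify the bound $MmS_{v_i}^3(\mathcal{M})\le 4055000$ by a self-contained pigeonhole/averaging argument, whereas the paper obtains the exact value $MmS=mMS=4055000$ from the perfectly balanced labelled partitions of Appendix~\ref{sec:non_existence_chores_proof} and from \citet{PrWa14a}; your version is slightly more elementary and equally sufficient.
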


Not only do MmS allocations not exist in general, computing an MmS
allocation is strongly NP-hard if it exists. The reduction is 
{straight forward} from Integer Partition to an allocation instance in which each agent has the same utility function.\footnote{The complexity result for goods has already been proved by \citet{BoLe14a}.}

\begin{proposition}\label{complexity_goods_chores}
For both goods and chores, computing an MmS allocation - if it exists - is strongly NP-hard. The problem is weakly NP-hard even for two agents. 
\end{proposition}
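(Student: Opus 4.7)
The plan is to reduce from the strongly NP-complete 3-Partition problem for the first claim and from the weakly NP-complete Partition problem for the two-agent claim. I describe the reduction for goods in detail; the chores case will follow by re-using the construction and invoking Proposition~\ref{transfer_goods_chores}.

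First, given a 3-Partition instance with positive integers $a_1,\ldots,a_{3N}$ summing to $NT$ and each strictly between $T/4$ and $T/2$, I would construct an instance $I=(\mathcal{M},[N],(u_i)_{i\in[N]})\in\mathcal{I}^+$ with $\mathcal{M}=\{1,\ldots,3N\}$ and identical additive utilities $u_i(j)=a_j$. Since all agents share the same utility function, any partition attaining $\max_\pi \min_k u(\pi_k)$ simultaneously achieves the MmS guarantee for every agent, so an MmS allocation exists unconditionally in the constructed instance. This is important, because the computational problem is only defined on instances where an MmS allocation exists, and my reduction must produce such instances.

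Next I would argue correctness. The common MmS guarantee is at most $T$, since the total utility is $NT$; it equals $T$ precisely when the items admit a partition into $N$ bundles each of value exactly $T$, which under the range restriction $T/4<a_j<T/2$ is equivalent to a yes-instance of 3-Partition (each bundle must contain exactly three elements). Hence from any MmS allocation $(S_1,\ldots,S_N)$ returned by a hypothetical polynomial-time algorithm, one decides 3-Partition in linear time by checking whether $u_i(S_i)=T$ for every $i$. For chores, I would apply the same construction with $v_i(j)=-a_j$; by Proposition~\ref{transfer_goods_chores}, an MmS allocation for this chores instance coincides with a perverse mMS allocation for $I$, and one reads off the 3-Partition answer by checking whether $v_i(S_i)=-T$ for every $i$.

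For the weak NP-hardness with two agents, I would use the same scheme but reduce from the classical (weakly NP-complete) Partition problem: given integers $a_1,\ldots,a_n$ summing to $2T$, construct two identical agents with utilities $\pm a_j$; the common MmS guarantee equals $\pm T$ if and only if the integers admit a balanced split, and this is again tested directly from any returned MmS allocation.

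The main (and essentially only) subtlety is the ``if it exists'' qualifier in the statement: a hardness reduction to such a promise-style problem has to produce inputs on which the promise is automatically satisfied. This is free in the construction above because identical utility functions guarantee existence of an MmS allocation a priori, so the reduction is legitimate and the standard NP-hardness arguments go through.
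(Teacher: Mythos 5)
Your proof is correct and follows essentially the same route as the paper: a reduction from the restricted (strongly NP-complete) 3-Partition problem using identical additive utilities for all agents, a reduction from Integer Partition for the two-agent case, and a transfer to chores via Proposition~\ref{transfer_goods_chores}. Your explicit handling of the ``if it exists'' promise (noting that identical utilities guarantee existence of an MmS allocation) and of how the decision is read off from the returned allocation is a welcome extra degree of care, but it does not change the underlying argument.
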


\section{$2$-Approximation for MmS for Chores}
\label{sec:greedy_2approx}
The purpose of this section is to present a polynomial-time $2$-approximation algorithm for MmS for chores. Each agent is  guaranted at most twice her (non-positive) max-min share guarantee.


Through the entire section, let $I=(\mathcal{M},[N],(u_i)_{i\in[N]})\in\mathcal{I}^+$ be a non-negative instance. We define $u_{max}^{i}:=\max_{j\in\mathcal{M}} u_i(j)$ for all $i\in[N]$

If we are given a chores instance $-I$, we run a round robin protocol in which in which agents come in round robin manner and are given a most preferred available item. Framed in terms of insance $I$, we consider the round-robin protocol in which agents come in round robin manner and are given an available item with the lowest utility.  


\begin{lemma}\label{greedyupperbound_help}
	Let $(S_1,\ldots,S_N)\in\Pi_N(\mathcal{M})$ be the allocation obtained by the round-robin greedy protocol for $I$. Then we have $u_i(S_i)\le u_i(S_{i'}) + u_{max}^{i}$
	for all agents $i,i'\in[N]$.
\end{lemma}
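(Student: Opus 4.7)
The plan is to fix the round-robin order $\pi(1),\pi(2),\ldots,\pi(N)$, set $a=\pi^{-1}(i)$ and $b=\pi^{-1}(i')$, and couple the two agents' picks round by round. For each round $r$ in which agent $k$ is due to pick, write $x_r^k$ for the item she selects; by definition of the greedy rule, $x_r^k$ minimizes $u_k$ over the items still available when $k$'s turn comes. Observe that $u_i(x_r^i)\le u_{\max}^i$ trivially for every $r$, which will be the only bound I use on an ``unmatched'' pick of $i$.

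Case 1 ($a<b$): in every round $r$ in which both agents pick, $i$ moves before $i'$, so the item $x_r^{i'}$ is still available at $i$'s turn; greediness gives $u_i(x_r^i)\le u_i(x_r^{i'})$. Summing these inequalities over all rounds where both are active, the total contribution on the left is at most $u_i(S_{i'})$. If $|\mathcal{M}|$ is not a multiple of $N$, $i$ may have one additional pick in a final partial round that $i'$ does not participate in; this unmatched term is bounded by $u_{\max}^i$, giving $u_i(S_i)\le u_i(S_{i'})+u_{\max}^i$.

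Case 2 ($a>b$): now $i'$ moves first in each round, so the natural coupling is offset by one. At $i$'s turn in round $r$, the item $x_{r+1}^{i'}$ that $i'$ will take in round $r+1$ is still available, so $u_i(x_r^i)\le u_i(x_{r+1}^{i'})$ whenever the right-hand side exists. Summing over $r=1,\ldots,T-1$ (where $T$ is the number of rounds in which $i$ picks) telescopes into $\sum_{r=2}^{T}u_i(x_r^{i'})\le u_i(S_{i'})$, and the only uncoupled term on the left is $u_i(x_T^i)\le u_{\max}^i$. A small bookkeeping check covers the sub-case where $i'$ has one more pick than $i$ (the last pick of $i'$ simply strengthens the inequality).

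The one nontrivial point is the bookkeeping when $|\mathcal{M}|$ is not divisible by $N$, because different agents end up with different numbers of items; however, in every configuration the mismatch is exactly one pick of $i$, which is absorbed by the slack $u_{\max}^i$ on the right-hand side. This is where I expect the most care in writing, but no new idea is needed.
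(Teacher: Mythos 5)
Your proposal is correct and follows essentially the same route as the paper's proof: the trivial case $i=i'$, direct round-by-round domination when $i$ picks first, and the off-by-one coupling $u_i(x_r^i)\le u_i(x_{r+1}^{i'})$ with telescoping when $i'$ picks first, absorbing the single unmatched pick into $u_{max}^{i}$ (and implicitly using non-negativity of $u_i$ to drop $u_i(x_1^{i'})$). The bookkeeping for a partial final round is handled the same way in both arguments, so there is nothing substantive to add.
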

\begin{proof}
	The result is trivial for $i=i'$. If $i<i'$, i.e., $i$ always chooses before $i'$, then the result is also obvious because $i$ will choose her lowest valued item in every round and has to pick at most one item more than $i'$ in total (which is compensated by $u_{max}^{i}$).
	
	Therefore, we can assume that $i>i'$, i.e., $i'$ picks before $i$ in every round. The protocol has exactly $K=\lceil\frac{|\mathcal{M}|}{N}\rceil$ rounds. The pick of agent $i$ ($i'$ resp.) in round $k$ is denoted by $r_k^{i}\in\mathcal{M}$ ($r_k^{i'}\in\mathcal{M}$ resp.). We have
	\begin{align*}
	&	u_i(S_{i'})-u_i(S_{i}) 
	= \sum_{k=1}^K u_i(r_k^{i'})-u_i(r_k^{i}) \\
	& = u_i(r_1^{i'})-u_i(r_1^{i})+u_i(r_2^{i'})-u_i(r_2^{i})+\ldots \\
	& \ldots +u_i(r_{K-1}^{i'})-u_i(r_{K-1}^{i})+u_i(r_K^{i'})-u_i(r_K^{i}) \\
	\end{align*}
	
	We separate two cases. In the case that agent $i'$ has to pick an item in the last round, we have $u_i(r_k^{i})\le u_i(r_{k+1}^{i'})$ for all $k=1,\ldots,K-1$ (picking rule) and therefore 
$u_i(S_{i'})-u_i(S_{i}) \ge u_i(r_1^{i'})-u_i(r_K^{i}).$
	In the other case where agent $i'$ does not have to pick anymore in the last round, agent $i$ also does not have to pick since she picks after $i'$. It follows that
	\begin{align*}
	& u_i(S_{i'})-u_i(S_{i}) 
	= \sum_{k=1}^{K-1} u_i(r_k^{i'})-u_i(r_k^{i}) \\
	& = u_i(r_1^{i'})-u_i(r_1^{i})+u_i(r_2^{i'})-u_i(r_2^{i})+\ldots \\
	& \ldots +u_i(r_{K-1}^{i'})-u_i(r_{K-1}^{i}) \\
	& \ge u_i(r_1^{i'})-u_i(r_{K-1}^{i})
	\end{align*}
	since $u_i(r_k^{i})\le u_i(r_{k+1}^{i'})$ for all $k=1,\ldots,K-2$ (picking rule).
	
	Combining both cases, this gives us $u_i(S_i)-u_i(S_{i'}) \le u_i(r_\alpha^{i}) - u_i(r_1^{i'}) $
	with $\alpha\in\{K-1,K\}$ and finally $u_i(S_i)-u_i(S_{i'})\le u_{max}^{i}.$
\end{proof}

\begin{lemma}\label{greedyupperbound}
	Let $(S_1,\ldots,S_N)\in\Pi_N(\mathcal{M})$ be the allocation obtained by the round-robin greedy protocol. Then we have 
	$$u_i(S_i)\le \frac{1}{N}\cdot u_i(\mathcal{M}) + \left(1-\frac{1}{N}\right)\cdot u_{max}^{i}$$
	for each agent $i\in[N]$.
\end{lemma}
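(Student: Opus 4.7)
The plan is to bootstrap Lemma~\ref{greedyupperbound_help} by summing over all agents other than $i$. First, I would fix an arbitrary agent $i \in [N]$. For every other agent $i' \in [N] \setminus \{i\}$, the previous lemma gives
\[
u_i(S_i) \le u_i(S_{i'}) + u_{max}^{i}.
\]
Summing these $N-1$ inequalities yields
\[
(N-1)\cdot u_i(S_i) \le \sum_{i' \neq i} u_i(S_{i'}) + (N-1)\cdot u_{max}^{i}.
\]

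Next I would exploit the fact that $(S_1,\ldots,S_N)$ is a partition of $\mathcal{M}$, so by additivity of $u_i$ we have $\sum_{i'=1}^N u_i(S_{i'}) = u_i(\mathcal{M})$, and therefore $\sum_{i' \neq i} u_i(S_{i'}) = u_i(\mathcal{M}) - u_i(S_i)$. Substituting this gives
\[
(N-1)\cdot u_i(S_i) \le u_i(\mathcal{M}) - u_i(S_i) + (N-1)\cdot u_{max}^{i}.
\]
Collecting the $u_i(S_i)$ terms on the left and dividing by $N$ yields the claimed bound
\[
u_i(S_i) \le \frac{1}{N}\cdot u_i(\mathcal{M}) + \left(1 - \frac{1}{N}\right)\cdot u_{max}^{i}.
\]

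There is no real obstacle here beyond noticing that one should sum over the $N-1$ agents $i' \neq i$ rather than all $N$ agents: summing over all $N$ indices (including $i' = i$, which is trivially $u_i(S_i) \le u_i(S_i) + u_{max}^i$) would only give the weaker bound $u_i(S_i) \le \tfrac{1}{N}u_i(\mathcal{M}) + u_{max}^i$, losing the crucial factor $(1 - 1/N)$ needed for the subsequent $2$-approximation guarantee. The rest is purely an arithmetic rearrangement relying on additivity of $u_i$ and the partition property of the round-robin allocation.
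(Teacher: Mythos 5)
Your proof is correct and is essentially identical to the paper's: both sum the inequality of Lemma~\ref{greedyupperbound_help} over the $N-1$ agents $i'\neq i$, use additivity and the partition property to replace $\sum_{i'} u_i(S_{i'})$ by $u_i(\mathcal{M})$, and divide by $N$. The only difference is cosmetic (the paper starts from $N\cdot u_i(S_i)$ and absorbs the $u_i(S_i)$ term immediately, while you move it over at the end).
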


\begin{proof}
	We have
	 $N\cdot u_i  (S_i)  \stackrel{\ref{greedyupperbound_help}}{\le} u_i(S_i) + \sum_{i\neq i'\in[N]} u_i(S_{i'}) + (N-1)\cdot u_{max}^{i}
= u_i(\mathcal{M}) + (N-1)\cdot u_{max}^{i}.$
	Division by $N$ yields the result.
\end{proof}

\begin{proposition}
	\label{2approxgoods}
	The round-robin greedy allocation protocol gives an allocation $(S_1,\ldots,S_N)\in\Pi_N(\mathcal{M})$ with
	$$ u_i(S_i)\le \left(2-\frac{1}{N}\right)\cdot mMS_{u_i}^N(\mathcal{M})$$
	for all $i\in[N]$. The inequality cannot be improved for general instances.
\end{proposition}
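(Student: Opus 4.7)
The plan is to reduce the proposition to Lemma~\ref{greedyupperbound} by upper-bounding each of its two summands by a suitable multiple of $mMS_{u_i}^N(\mathcal{M})$. I would first record two elementary lower bounds on the min-Max share guarantee. For any $N$-partition $(T_1,\ldots,T_N)$ of $\mathcal{M}$, an averaging argument gives $\max_j u_i(T_j) \ge \tfrac{1}{N} u_i(\mathcal{M})$, and pigeonhole on the most valuable single item (using that $u_i\ge 0$, so extra items in its bundle cannot decrease that bundle's value) gives $\max_j u_i(T_j) \ge u_{max}^i$. Taking the minimum over partitions preserves both inequalities, so $mMS_{u_i}^N(\mathcal{M}) \ge \tfrac{1}{N} u_i(\mathcal{M})$ and $mMS_{u_i}^N(\mathcal{M}) \ge u_{max}^i$.

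Substituting both bounds into Lemma~\ref{greedyupperbound} immediately yields $u_i(S_i) \le mMS_{u_i}^N(\mathcal{M}) + (1-\tfrac{1}{N})\,mMS_{u_i}^N(\mathcal{M}) = (2-\tfrac{1}{N})\,mMS_{u_i}^N(\mathcal{M})$. The key piece of arithmetic to flag is that the first summand $\tfrac{1}{N}u_i(\mathcal{M})$ is \emph{already} at most a full $mMS$ after the averaging bound (not $\tfrac{1}{N}$ of one), while the second summand contributes only the $(1-\tfrac{1}{N})$ fraction; their sum is exactly $2-\tfrac{1}{N}$.

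For tightness I would exhibit, for each $N$, an identical-utility instance attaining the ratio. Take $\mathcal{M}$ to consist of $N(N-1)$ items of utility $1$ together with one item of utility $N$. Under the round-robin protocol, each agent picks a strictly-lowest value-$1$ item in each of the first $N-1$ rounds, consuming all $N(N-1)$ unit items; in round $N$ only the value-$N$ item is left and it is taken by agent $1$. Hence $u_1(S_1) = (N-1) + N = 2N-1$. On the min-Max side, isolating the value-$N$ item and splitting the unit items evenly among the remaining $N-1$ bundles gives maximum bundle value $N$, and pigeonhole (some bundle among the other $N-1$ must receive at least $N$ unit items) shows no partition does better, so $mMS_{u_1}^N(\mathcal{M}) = N$. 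The ratio is $(2N-1)/N = 2-\tfrac{1}{N}$, matching the upper bound.

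The main obstacle is essentially absorbed into Lemma~\ref{greedyupperbound}; the only delicate step left is identifying the right tight family, which requires sizing the collection of small items so that the protocol consumes them exactly over $N-1$ complete rounds and forces agent $1$ to pick up the large item in the very last round. Once the instance is chosen, both the lower bound $mMS\ge N$ and the round-robin trace are routine to verify.
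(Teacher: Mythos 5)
Your proposal is correct and follows essentially the same route as the paper: the same two lower bounds $mMS_{u_i}^N(\mathcal{M})\ge \frac{1}{N}u_i(\mathcal{M})$ and $mMS_{u_i}^N(\mathcal{M})\ge u_{max}^i$ plugged into Lemma~\ref{greedyupperbound}, and the same tight family (the paper uses $N(N-1)$ items of value $\frac{1}{N}$ plus one of value $1$, which is your instance rescaled by $\frac{1}{N}$).
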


\begin{proof}
	By definition of the min-max share guarantee and the  additivity of $u_i$, we obtain
	\begin{equation}
	\label{mean<=mMS}
	\frac{1}{N}\cdot u_i(\mathcal{M}) \le mMS_{u_i}^N(\mathcal{M})
	\end{equation}
	and
	\begin{equation}
	\label{umax<=mMS}
	u_{max}^{i}\le mMS_{u_i}^N(\mathcal{M})
	\end{equation}
	for all $i\in[N]$.
	
	Then we have
	$u_i(S_i)  \stackrel{\ref{greedyupperbound}}{\le}\frac{1}{N}\cdot u_i(\mathcal{M})  + \left(1-\frac{1}{N}\right)\cdot u_{max}^{i} 
	 \stackrel{\eqref{mean<=mMS}}{\le} mMS_{u_i}^N(\mathcal{M}) + \left(1-\frac{1}{N}\right)\cdot u_{max}^{i} 
\stackrel{\eqref{umax<=mMS}}{\le} \left(2-\frac{1}{N}\right) \cdot mMS_{u_i}^N(\mathcal{M})$
	for each agent $i\in[N]$, which proves the first part of the result.
    
	    The bound cannot be improved in general. To show this, we give an example where the bound is tight. Consider a set $\mathcal{M}=(t_1,t_2,\ldots,t_{(N-1)\cdot N+1})$ of $(N-1)\cdot N+1$ items. Let the utility function $u$ be the same for all agents with $u(t_j)=\frac{1}{N}$ for all $j=1,\ldots,(N-1)\cdot N$ and $u(t_{(N-1)\cdot N+1})=1$.

    Then we have $mMS^N_u(\mathcal{M})=1$ because we get a (perfectly balanced) $N$-partition by packing $N-1$ sets with exactly $N$ of the $\frac{1}{N}$-valued items and a last set consisting just of the one $1$-valued item.

    Since $u$ is the same for all agents, the greedy round-robin algorithm gives agent $1$ the allocation $S_1=\{t_1,t_{N+1},t_{2N+1}\ldots,t_{(N-1)N+1}\}$ for which
    $$u(S_1)=(N-1)\cdot \frac{1}{N} + 1 = 2-\frac{1}{N} = \left(2-\frac{1}{N}\right) \cdot mMS^N_u(\mathcal{M})$$
    holds true.
\end{proof}

\begin{theorem}\label{greedy_approx_chores}
Let $I=(\mathcal{M},[N],(d_i)_{i\in[N]})\in \mathcal{I}^-$ be a non-positive instance and denote the round-robin greedy allocation for $-I$ by $(S_1,\ldots,S_N)\in\Pi_N(\mathcal{M})$. Then we have
$ d_i(S_i)\ge \left(2-\frac{1}{N}\right)\cdot MmS_{d_i}^N(\mathcal{M})$
	for all $i\in[N]$ and the inequality cannot be improved for general instances.
\end{theorem}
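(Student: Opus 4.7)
The plan is to reduce this statement to Proposition~\ref{2approxgoods} via the duality between goods and chores already established in Lemma~\ref{MmS=-mMS} and Propositions~\ref{transfer_goods_chores}/\ref{transfer_lambda}. Given the non-positive instance $I=(\mathcal{M},[N],(d_i)_{i\in[N]})$, its corresponding instance $-I=(\mathcal{M},[N],(-d_i)_{i\in[N]})$ is non-negative, i.e.\ $-I\in\mathcal{I}^+$. First I would verify that the round-robin greedy allocation for the chores instance $I$ (where each agent takes her \emph{most} preferred item, i.e.\ the available item with largest $d_i$-value) coincides with the round-robin greedy allocation for the non-negative instance $-I$ as defined in Section~\ref{sec:greedy_2approx} (where each agent takes the available item with the \emph{lowest} utility under $u_i:=-d_i$). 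Maximizing $d_i$ is the same as minimizing $-d_i$, so the two protocols produce the very same sequence of picks and hence the same partition $(S_1,\ldots,S_N)$.

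Next I would apply Proposition~\ref{2approxgoods} to $-I$ with utilities $u_i=-d_i$, obtaining for every $i\in[N]$ the bound
\[
(-d_i)(S_i)\;\le\;\left(2-\tfrac{1}{N}\right)\cdot mMS_{-d_i}^N(\mathcal{M}).
\]
By Lemma~\ref{MmS=-mMS}, $mMS_{-d_i}^N(\mathcal{M})=-MmS_{d_i}^N(\mathcal{M})$, so multiplying through by $-1$ reverses the inequality and yields $d_i(S_i)\ge\bigl(2-\tfrac{1}{N}\bigr)\cdot MmS_{d_i}^N(\mathcal{M})$, which is the desired inequality. (Alternatively, one can cite Proposition~\ref{transfer_lambda} with $\lambda=2-\tfrac{1}{N}$ to package both steps at once.)

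For the tightness claim, I would simply negate the extremal example already used in Proposition~\ref{2approxgoods}: take $\mathcal{M}$ consisting of $(N-1)N$ items of utility $-\tfrac{1}{N}$ and one item of utility $-1$, with the same utility function $d$ for all agents. Then $MmS_{d}^N(\mathcal{M})=-1$ (achieved by the partition that isolates the $-1$ item), while the round-robin greedy run on $-I$ gives agent~$1$ exactly one item of value $-1$ together with $N-1$ items of value $-\tfrac{1}{N}$, yielding $d(S_1)=-\bigl(2-\tfrac{1}{N}\bigr)=\bigl(2-\tfrac{1}{N}\bigr)\cdot MmS_{d}^N(\mathcal{M})$, so the bound is attained with equality.

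I do not expect any real obstacle here, since all of the heavy lifting is already in Proposition~\ref{2approxgoods} and the duality lemmas. The only point that needs care is the bookkeeping of inequality directions when signs flip, and the observation that the chores round-robin protocol and the goods round-robin protocol on $-I$ are literally the same algorithm.
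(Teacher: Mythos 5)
Your proof is correct and takes essentially the same route as the paper's, which derives the theorem in one line from Proposition~\ref{2approxgoods} combined with the sign-flip duality of Proposition~\ref{transfer_lambda}. Your explicit check that the two round-robin protocols coincide and your negated tightness example are exactly the details the paper leaves implicit.
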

\begin{proof} Follows immediately from \ref{2approxgoods} with \ref{transfer_lambda}.
\end{proof}



\section{Optimal MmS Fairness}

Before we can introduce the optimal MmS fairness concept, we have to define an instance-specific parameter.
\begin{definition}
\begin{enumerate}[topsep=0pt,itemsep=-1ex,partopsep=1ex,parsep=1ex]
\item For a non-negative instance $I\in\mathcal{I}^+$, the \textbf{optimal MmS ratio} $\lambda^{I}$ is defined as the maximal $\lambda\in[0,\infty]$ for which the $\lambda$-max-min-problem for $I$ has a solution.
\item For a non-positive instance $I\in\mathcal{I}^-$, the \textbf{optimal MmS ratio} $\lambda^{I}$ is defined as the minimal $\lambda\in [0,\infty)$ for which the $\lambda$-max-min-problem for $I$ has a solution.
\end{enumerate}
\end{definition}
Note that both the maximum and the minimum exist in this definition since for a fixed instance $I$, there is only a finite number of possible allocations. We have the following initial bounds for the optimal MmS ratio.
\begin{lemma}\label{optimal_mms_bounds}
\begin{enumerate}[topsep=0pt,itemsep=-1ex,partopsep=1ex,parsep=1ex]
\item Let $I=(\mathcal{M},[N],(u_i)_{i\in[N]})\in\mathcal{I}^+$ be a non-negative instance. Then we have
$\frac{2}{3}\le\lambda^{I}\le\infty$,
with $\lambda^{I}=\infty$ if and only if $MmS_{u_i}^N(\mathcal{M})=0$ for all $i\in[N]$.
\item Let $I=(\mathcal{M},[N],(d_i)_{i\in[N]})\in\mathcal{I}^-$ be a non-positive instance. Then we have
$0\le\lambda^{I}\le 2$, and $\lambda^{I}=0$ if $MmS_{d_i}^N(\mathcal{M})=0$ for an $i\in[N]$.
\end{enumerate}
\end{lemma}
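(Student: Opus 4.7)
The plan is to split the four inequalities into the trivial parts (immediate from the definition) and the two substantive bounds, which will invoke results already established in the paper, and then characterize the two extremes separately.

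The bound $\lambda^I\le\infty$ in part~1 and $\lambda^I\ge 0$ in part~2 are forced by the stipulated codomains $[0,\infty]$ and $[0,\infty)$ in the definition of $\lambda^I$. For the lower bound $\lambda^I\ge 2/3$ in part~1, I would invoke the Procaccia--Wang~\citep{PrWa14a} result (reproved by~\citep{AMNS15a}) that a $\tfrac{2}{3}$-MmS allocation of goods always exists; applied to the non-negative instance $I$ it witnesses feasibility of $\lambda=\tfrac{2}{3}$, and since $\lambda^I$ is the \emph{maximum} feasible $\lambda$ we obtain $\lambda^I\ge \tfrac{2}{3}$. For the upper bound $\lambda^I\le 2$ in part~2, Theorem~\ref{greedy_approx_chores} produces an allocation witnessing feasibility of $\lambda=2-\tfrac{1}{N}$, so the \emph{minimum} feasible $\lambda$ satisfies $\lambda^I\le 2-\tfrac{1}{N}\le 2$.

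To characterize $\lambda^I=\infty$ in part~1: if every $MmS_{u_i}^N(\mathcal{M})=0$, then the constraint $u_i(S_i)\ge\lambda\cdot 0=0$ holds vacuously (using $u_i\ge 0$), so any $\lambda$, including $\infty$, is feasible; conversely, if some $MmS_{u_i}^N(\mathcal{M})>0$, then $u_i(S_i)\le u_i(\mathcal{M})$ forces $\lambda\le u_i(\mathcal{M})/MmS_{u_i}^N(\mathcal{M})<\infty$ for any feasible $\lambda$, giving $\lambda^I<\infty$.

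For $\lambda^I=0$ in part~2, the key observation is that, because $d_i\le 0$, the equality $MmS_{d_i}^N(\mathcal{M})=0$ forces $d_i\equiv 0$: any single item $x$ with $d_i(x)<0$ would sit in some bundle of every $N$-partition, pulling that bundle's value (and hence the min over bundles) strictly below zero, contradicting the max-min being $0$. Once $d_i\equiv 0$, I would exhibit the explicit allocation $S_i:=\mathcal{M}$ and $S_j:=\emptyset$ for $j\ne i$; by additivity $d_k(S_k)=0$ for every $k$, so $d_k(S_k)\ge 0=0\cdot MmS_{d_k}^N(\mathcal{M})$, witnessing feasibility of $\lambda=0$ and hence $\lambda^I=0$. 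I expect this to be the main conceptual hurdle: the hypothesis only demands vanishing MmS for a single agent, and the trick is spotting that in the chores setting this already pins down that agent's utility function to be identically zero, so the whole set $\mathcal{M}$ can simply be dumped on agent~$i$ while leaving every other bundle empty.
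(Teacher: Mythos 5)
Your proposal is correct and follows essentially the same route as the paper: the trivial bounds from the definition, the $\tfrac{2}{3}$ lower bound from Procaccia--Wang, the upper bound $2$ from the round-robin $\bigl(2-\tfrac{1}{N}\bigr)$-approximation of Theorem~\ref{greedy_approx_chores}, and for part~2 the observation that $MmS^N_{d_i}(\mathcal{M})=0$ forces $d_i\equiv 0$ so all items can be given to agent~$i$. You spell out the two directions of the $\lambda^I=\infty$ equivalence slightly more explicitly than the paper does, but the argument is the same.
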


\begin{proof}
	\noindent
\begin{enumerate}[topsep=0pt,itemsep=-1ex,partopsep=1ex,parsep=1ex]
\item The inequality $\frac{2}{3}\le\lambda^{I}$ is a result of the approximation algorithm presented by \citet{PrWa14a} while the inequality $\lambda^{I}\le\infty$ is trivial. The equality $\lambda^{I}=\infty$ holds if and only if the $\lambda$-max-min problem for $I$ has a solution for all $\lambda\in\mathbb{R}$. This is equivalent to $MmS_{u_i}^N(\mathcal{M})=0$ for all $i\in[N]$.
\item The inequality $\lambda^{I}\ge 0$ holds per definition while $\lambda^{I}\le 2$ follows from the previous section. 
Finally, if there is an agent $i\in[N]$ with $MmS^N_{d_i}(\mathcal{M})=0$ this implies $d_i\equiv0$ and therefore allocating all items to agent $i$ gives a solution of the $0$-max-min problem for $I$.
\end{enumerate}
\end{proof}


We do not claim that the introduced bounds of $\frac{2}{3}$ and $2$ are tight. The proof of the lemma shows another difference between MmS for goods and MmS for chores. If in the case of chores, the MmS guarantee of an agent is $0$, then the utility function of this agent is equal to $0$. The same result does not hold true for goods. 

Based on the previous notations, we define a new fairness concept called \emph{optimal MmS}, which is a natural variant of MmS fairness.

\begin{definition}
For an instance $I=(\mathcal{M},[N],(v_i)_{i\in[N]})\in\mathcal{I}$, an \textbf{optimal MmS allocation} for $I$ is an allocation $(S_1,\ldots,S_N)\in\Pi_N(\mathcal{M})$ with $v_i(S_i)\ge \lambda^{I}\cdot MmS_{v_i}^N(\mathcal{M})$ for all $i\in[N]$.
\end{definition}

There are two main advantages {to} the introduced concept. First, for each specific instance $I\in\mathcal{I}$, we can guarantee the existence of an optimal MmS allocation.\footnote{We have to set $\infty\cdot 0:=0$ according to Lemma~\ref{optimal_mms_bounds}.} Second, an optimal Mms allocation is always an MmS allocation if the latter exists. Both observations follow immediately from the definitions. We will give an introductory example for optimal MmS allocations both for goods and chores in the following. 

\begin{example}
Define an instance $I=(\mathcal{M},[2],(u_i)_{i\in[2]})\in\mathcal{I}^+$ with a set $[2]=\{1,2\}$ of two agents and a set of two items $\mathcal{M}=\{a,b\}$. We define $u_1(a)=r$, $u_1(b)=1$, $u_2(a)=1$, and $u_2(b)=r$ for some $r>1$. 
\begin{enumerate}[topsep=0pt,itemsep=-1ex,partopsep=1ex,parsep=1ex]
\item We have $MmS_{u_1}^2(\mathcal{M})=MmS_{u_2}^2(\mathcal{M})=1$ which means that $S_1=\{b\}$ and $S_2=\{a\}$ is an MmS allocation for $I$ where each agent gets a total utility of $1$. The optimal MmS allocation for $I$ would be $S_1=\{a\}$ and $S_2=\{b\}$ giving each agent a total utility of $r$. In particular, $\lambda^{I}=r$. 
\item We have $MmS_{-u_1}^2(\mathcal{M})=MmS_{-u_2}^2(\mathcal{M})=-r$ which means that $S_1=\{a\}$ and $S_2=\{b\}$ is an MmS allocation for $-I$ where each agent gets a total utility of $-r$. The optimal MmS allocation for $-I$ would be $S_1=\{b\}$ and $S_2=\{a\}$ giving each agent a total utility of $-1$. In particular, $\lambda^{-I}=\frac{1}{r}$. 
\end{enumerate}
\end{example}

These examples also show that each agent's ratio of the utility in an optimal MmS allocation to the utility in an MmS allocation can be arbitrarily large (for goods) or small (for chores) as $r>1$ can be any real number.
Another natural question is {the worst case for the optimal MmS allocation in comparison to the MmS guarantee}. This is addressed by the following definition.

\begin{definition}
\begin{enumerate}[topsep=0pt,itemsep=-1ex,partopsep=1ex,parsep=1ex]
\item The \textbf{universal MmS ratio for goods} is defined as
$\lambda^+:= \inf_{I\in\mathcal{I}^+} \lambda^{I}.$
\item The \textbf{universal MmS ratio for chores} is defined as
$\lambda^-:= \sup_{I\in\mathcal{I}^-} \lambda^{I}.$
\end{enumerate}
\end{definition}
We can give bounds for and connections between the instance-dependent \emph{optimal} and the instance-independent \emph{universal} MmS ratios.
\begin{remark}
 Clearly, we have by definition $\lambda^+\le \lambda^{I}$ for all $I\in\mathcal{I}^+$ and $\lambda^-\ge \lambda^{I}$ for all $I\in\mathcal{I}^-$. Furthermore, we have:
\begin{enumerate}
\item $\frac{2}{3}\le\lambda^+<1$ by Lemma~\ref{optimal_mms_bounds} and the non-existence example presented by \citet{PrWa14a}.
\item $1<\lambda^-\le 2$ by Prop.~\ref{non_existence_chores} and Lemma~\ref{optimal_mms_bounds}.
\end{enumerate}
\end{remark}

We presented some properties of optimal MmS allocations. But since the complexity of computing an MmS allocation for both goods and chores - if it exists - is strongly NP-hard (Prop. \ref{complexity_goods_chores}), the same holds true for the computation of an optimal MmS allocation. However, we will show in the next sections that there is a PTAS for the computation of such an allocation as long as the number of agents is fixed.
	
\section{Exact Algorithm and PTAS for Optimal MmS Fairness (Chores)}
In this section, we develop a PTAS for finding an optimal MmS allocation for chores when the number of agents is fixed. The PTAS is based on the following exact algorithm.

\begin{algo}\label{alg:makespan_chores}
Given a non-negative instance $(\mathcal{M},[N],(u_i)_{i\in[N]})\in\mathcal{I}^+$, we state an algorithm consisting of the following steps.
	\begin{enumerate}[topsep=0pt,itemsep=-1ex,partopsep=1ex,parsep=1ex]
		\item Compute $c_i:=mMS_{u_i}^N(\mathcal{M})$ for each agent $i\in[N]$.
		\item Define new additive utility functions $u'_i:\mathcal{M}\to\mathbb{R}_{\ge 0}$ for all $i\in[N]$ by 
		$$u_i'(j):=\frac{1}{c_i}\cdot u_i(j)\quad\forall j\in\mathcal{M}$$
		if $c_i> 0$ and $u_i':\equiv 0$ if $c_i=0$.
		\item Solve the $R/C_{max}$ problem where each machine represents one agent and the processing times are defined as $p_{ij}:=u'_i(j)$ for all $i\in[N]$ and $j\in\mathcal{M}$. Denote the optimal objective function value by $\lambda^\ast$ and the corresponding allocation by $S^\ast=(S_1^\ast,\ldots,S_N^\ast)\in\Pi_N(\mathcal{M})$.
	\end{enumerate}
\end{algo}

\begin{proposition}
Given a non-positive instance $I=(\mathcal{M},[N],(d_i)_{i\in[N]})\in\mathcal{I}^-$. Execute algorithm \ref{alg:makespan_chores} for the corresponding instance $(\mathcal{M},[N],(u_i)_{i\in[N]}):=-I\in\mathcal{I}^+$. Then we have $\lambda^\ast=\lambda^I$ and $S^\ast$ is an optimal MmS allocation for $I$.
\end{proposition}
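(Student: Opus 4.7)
The plan is to unfold the definitions and observe that, after the rescaling in step~2, the $R/C_{\max}$ instance in step~3 is exactly the problem of computing $\lambda^I$ together with a witnessing allocation. First I would translate the optimality condition for the chores instance $I$ into one phrased on the corresponding goods instance $-I$: by Lemma~\ref{MmS=-mMS} and Proposition~\ref{transfer_lambda}, an allocation $S$ solves the $\lambda$-max-min problem for $I$ iff $u_i(S_i)\le \lambda\cdot mMS_{u_i}^N(\mathcal{M})=\lambda c_i$ for every $i\in[N]$. Hence $\lambda^I$ is the smallest $\lambda\ge 0$ for which some allocation satisfies $u_i(S_i)\le \lambda c_i$ simultaneously for all $i$.

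Next, assuming $c_i>0$ for every agent, I would note that the $R/C_{\max}$ problem with processing times $p_{ij}=u_i(j)/c_i$ has objective $\max_i u_i(S_i)/c_i$, so its optimum $\lambda^\ast$ is exactly the smallest $\lambda$ for which some allocation makes $u_i(S_i)/c_i\le \lambda$ for every $i$, i.e.\ $u_i(S_i)\le \lambda c_i$ for every $i$. Comparing with the characterization of $\lambda^I$ above yields $\lambda^\ast=\lambda^I$, and, reading the translation in the other direction, the minimizer $S^\ast$ is an allocation with $d_i(S_i^\ast)\ge \lambda^I\cdot MmS_{d_i}^N(\mathcal{M})$ for every $i$---which is exactly the defining condition of an optimal MmS allocation for $I$.

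The main subtlety---and the step I expect to require the most care---is the degenerate case in which $c_i=0$ for some agent $i$. Since $u_i$ is non-negative and additive, $c_i=mMS_{u_i}^N(\mathcal{M})=0$ forces $u_i\equiv 0$ (any item of strictly positive utility would keep the max strictly positive in every $N$-partition), so $d_i\equiv 0$ and $MmS_{d_i}^N(\mathcal{M})=0$; by Lemma~\ref{optimal_mms_bounds}(2), $\lambda^I=0$. On the algorithmic side, replacing $u_i'$ by $0$ allows the allocation that gives all items to such an agent to achieve $\max_{i'} u_{i'}'(S_{i'})=0$, so $\lambda^\ast=0=\lambda^I$; any optimal $S^\ast$ then also forces $u_{i'}(S_{i'}^\ast)=0$ for every $i'$ with $c_{i'}>0$, so the optimal MmS condition $d_{i'}(S_{i'}^\ast)\ge 0 = \lambda^I\cdot MmS_{d_{i'}}^N(\mathcal{M})$ is satisfied across all agents. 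Combining the two cases completes the proof.
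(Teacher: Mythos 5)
Your proposal is correct and follows essentially the same route as the paper's proof: translate the $\lambda$-max-min problem for $I$ into the perverse $\lambda$-min-max problem for $-I$ via Proposition~\ref{transfer_lambda}, then observe that after dividing by $c_i$ the $R/C_{max}$ optimum is exactly the minimal feasible $\lambda$. Your treatment of the degenerate case $c_i=0$ is in fact more explicit than the paper's (which simply notes $\lambda^\ast=0$ and that ``there is nothing to show''), but the argument is the same in substance.
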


\begin{proof}
We will show that $\lambda^\ast$ is the minimal $\lambda\in[0,\infty)$ for which a solution to the perverse $\lambda$-min-max problem for $-I$ exists and $S^\ast$ is a corresponding solution. The claim follows then with Prop. \ref{transfer_lambda}.

If $c_i=0$ (and therefore $u_i\equiv 0$) for an agent $i\in[N]$, we have $\lambda^\ast=0$ (because we can give all items to $i$) and there is nothing to show.
~Let us now assume $c_i> 0$ for all $i\in[N]$. $\lambda^\ast$ is by definition the minimal $\lambda\ge 0$ for which an allocation $(S_1,\ldots,S_N)\in\Pi_N(\mathcal{M})$ with $u'_i(S_i)\le\lambda$ for all $i\in[N]$ exists. But this condition is equivalent to 
$u_i(S_i)\le \lambda\cdot mMS_{u_i}^N(\mathcal{M})$ for all $i\in[N]$.
To sum up, $\lambda^\ast$ is the minimal $\lambda\ge 0$ for which a solution of the perverse $\lambda$-min-max problem for $-I$ exists and $S^\ast$ is a corresponding solution.
\end{proof}

There are two steps in algorithm \ref{alg:makespan_chores} that are exponential in time. First, each computation of $mMS_{u_i}^N(\mathcal{M})$ may require exponential time and second, finding an optimal solution to $R/C_{max}$ may require exponential time. The computation of $mMS^N_{u_i}(\mathcal{M})$ for an agent $i\in[N]$ is equivalent to the computation of a job partition that minimizes the makespan on $N$ identical parallel machines ($P/C_{max}$) where the processing time of a job $j\in\mathcal{M}$ is defined as $p_j:=u_i(j)$. 

\citet{Hochbaum1987} present a PTAS for $P/C_{max}$ and \citet{Lenstra1990} present a PTAS for $R_N/C_{max}$ (which means that the number of agents is fixed to $N$). This implies that we can run the following algorithm in polynomial time when the number of agents is fixed and therefore, it will be suitable to formulate a PTAS for the computation of an optimal MmS allocation when the number of agents is fixed.

\begin{algo}\label{alg:makespan_chores_approx}
Given a non-negative instance $I=(\mathcal{M},[N],(u_i)_{i\in[N]})\in\mathcal{I}^+$ and an $\varepsilon>0$, we state an algorithm consisting of the following steps.
	\begin{enumerate}[topsep=0pt,itemsep=-1ex,partopsep=1ex,parsep=1ex]
		\item Select $\alpha>1$ and $\beta>1$ with $\alpha\beta<1+\varepsilon$.
		\item Compute $c_i$ with $mMS_{u_i}^N(\mathcal{M})\le c_i\le \alpha\cdot mMS_{u_i}^N(\mathcal{M})$ for each agent $i\in[N]$.
		\item Define new additive utility functions
		$u'_i:\mathcal{M}\to\mathbb{R}_{\ge 0}$
		for all $i\in[N]$ by
		$$u_i'(j):= \frac{1}{c_i}\cdot u_i(j)\quad \forall j\in\mathcal{M}$$ 
		if $c_i> 0$ and $u_i':\equiv0$ if $c_i=0$.
		\item Consider the corresponding $R/C_{max}$ problem where the processing times are defined as $p_{ij}:=u'_i(j)$ for all $i\in[N]$ and $j\in\mathcal{M}$. Denote the optimal objective function value by $\lambda^\ast$. Compute an approximate solution $S^\varepsilon=(S^\varepsilon_1,\ldots,S^\varepsilon_N)\in\Pi_N(\mathcal{M})$
		with $u_i'(S^\varepsilon_i)\le\beta\lambda^\ast$ for all $i\in[N]$.
		\end{enumerate}
\end{algo}

\begin{theorem}\label{PTAS_maxmin_chores}
Let a non-positive instance $I=(\mathcal{M},[N],(d_i)_{i\in[N]})\in\mathcal{I}^-$, $\lambda\ge0$, and $\varepsilon>0$ be given. Assume that a solution of the $\lambda$-max-min problem for $I$ exists and execute Algorithm \ref{alg:makespan_chores_approx} for the pair $(-I,\epsilon)$. Then $S^\varepsilon$ is a solution of the $((1+\varepsilon)\cdot\lambda)$-max-min problem for $I$.
\end{theorem}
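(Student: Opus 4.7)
The plan is to reduce the whole statement to manipulations on the scaled makespan problem for $-I$, using Proposition \ref{transfer_lambda} at the two ends. First I would translate the hypothesis: by Proposition \ref{transfer_lambda}, the existence of a solution of the $\lambda$-max-min problem for $I$ is equivalent to the existence of an allocation $S=(S_1,\ldots,S_N)$ with $u_i(S_i)\le \lambda\cdot mMS_{u_i}^N(\mathcal{M})$ for every $i\in[N]$, where $u_i:=-d_i$. This is the hypothesis in the ``positive'' world where Algorithm \ref{alg:makespan_chores_approx} operates.

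Next I would chase through the scaling. Using the approximation bound $mMS_{u_i}^N(\mathcal{M})\le c_i\le \alpha\cdot mMS_{u_i}^N(\mathcal{M})$ from Step 2, the normalized utilities $u_i'=u_i/c_i$ (for $c_i>0$) satisfy
\[
u_i'(S_i) \;=\; \frac{u_i(S_i)}{c_i} \;\le\; \frac{\lambda\cdot mMS_{u_i}^N(\mathcal{M})}{c_i} \;\le\; \lambda,
\]
so the allocation $S$ is a feasible schedule with makespan at most $\lambda$ in the $R/C_{max}$ instance of Step 4. Consequently, the optimal makespan satisfies $\lambda^\ast\le\lambda$. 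The $\beta$-approximate schedule $S^\varepsilon$ returned by the PTAS of Lenstra--Shmoys--Tardos then satisfies $u_i'(S_i^\varepsilon)\le\beta\lambda^\ast\le\beta\lambda$ for every agent $i$.

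Unscaling and using $c_i\le\alpha\cdot mMS_{u_i}^N(\mathcal{M})$ yields
\[
u_i(S_i^\varepsilon) \;=\; c_i\cdot u_i'(S_i^\varepsilon) \;\le\; \alpha\beta\,\lambda\cdot mMS_{u_i}^N(\mathcal{M}) \;\le\; (1+\varepsilon)\,\lambda\cdot mMS_{u_i}^N(\mathcal{M}),
\]
where the last inequality uses the choice $\alpha\beta<1+\varepsilon$ from Step 1. Agents with $c_i=0$ have $u_i\equiv 0$ (by the definition of $c_i$ and the fact that the mMS guarantee is zero iff the utility function is zero on every item), so the inequality holds trivially. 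Finally, applying Proposition \ref{transfer_lambda} in the other direction with factor $(1+\varepsilon)\lambda$ converts ``$u_i(S_i^\varepsilon)\le (1+\varepsilon)\lambda\cdot mMS_{u_i}^N(\mathcal{M})$ for all $i$'' into ``$S^\varepsilon$ solves the $((1+\varepsilon)\lambda)$-max-min problem for $I$,'' which is exactly what we needed.

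No step is genuinely hard; the only part requiring care is making sure the approximation errors compound multiplicatively rather than additively. The product structure $\alpha\beta<1+\varepsilon$ is exactly what lets the two sources of slack (the approximation of $mMS_{u_i}^N(\mathcal{M})$ by $c_i$, and the approximation of $\lambda^\ast$ by the makespan algorithm) combine into a single $(1+\varepsilon)$ factor, and keeping the inequalities oriented correctly through the sign flip of Proposition \ref{transfer_lambda} is the most error-prone bookkeeping step.
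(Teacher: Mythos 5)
Your proposal is correct and follows essentially the same route as the paper's proof: translate the hypothesis via Proposition \ref{transfer_lambda}, show $\lambda^\ast\le\lambda$ from the normalized feasible schedule, combine the two multiplicative slacks via $\alpha\beta<1+\varepsilon$, handle the $c_i=0$ agents separately, and transfer back. No substantive differences.
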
 

\begin{proof}
Let $u_i$ refer to $-d_i$ for all $i\in[N]$. Since a solution of the $\lambda$-max-min problem for $I$ exists, we can conclude by \ref{transfer_lambda} that a solution of the perverse $\lambda$-min-max problem for $-I$ exists. 

This implies the existence of $(S_1,\ldots,S_N)\in\Pi_N(\mathcal{M})$ with $u_i(S_i)\le\lambda\cdot mMS_{u_i}^N(\mathcal{M})\le \lambda\cdot c_i$ for all $i\in[N]$. From this we have $u_i'(S_i)\le\lambda$ for all $i\in[N]$ (note that $c_i=0$ implies $u_i'\equiv0$) and we can conclude $\lambda^\ast\le\lambda$.

Define $[N]_{>0}:=\{i\in[N]|c_i>0\}$. This gives us
$$\max_{i\in[N]_{>0}} \frac{u_i(S^\varepsilon_i)}{c_i}=\max_{i\in[N]_{>0}} u_i'(S^\varepsilon_i)\le\beta\lambda^\ast\le \beta \lambda.$$

Since $c_i\le\alpha\cdot mMS_{u_i}^N(\mathcal{M})$, this leads us to
$$\max_{i\in[N]_{>0}}\frac{u_i(S^\varepsilon_i)}{mMS_{u_i}^N(\mathcal{M})} \le \alpha\beta \lambda \le (1+\varepsilon)\cdot\lambda$$
which is equivalent to
$$u_i(S^\varepsilon_i)\le (1+\varepsilon)\cdot\lambda\cdot mMS_{u_i}^N(\mathcal{M})$$
for all $i\in[N]_{>0}$ and the same is true for all $i\in[N]\backslash [N]_{>0}$ (since this means $u_i\equiv0$).
	 
This proves that $S^\varepsilon$ is a solution of the perverse $((1+\varepsilon)\cdot\lambda)$-min-max problem for $-I$. The result follows now by \ref{transfer_lambda}.
\end{proof}

Since Algorithm~\ref{alg:makespan_chores_approx} runs in polynomial time when the number of agents is fixed, this general result gives us immediately the following important corollary.

\begin{corollary}\label{PTAS_maxmin_chores_coro}
Let the number of agents be fixed to $N$ and let $I\in\mathcal{I}^-$ be a non-positive instance with $N$ agents. 
\begin{enumerate}[topsep=0pt,itemsep=-1ex,partopsep=1ex,parsep=1ex]
\item Executing algorithm \ref{alg:makespan_chores_approx} for $-I$ and $\varepsilon>0$ gives a PTAS for the computation of an optimal MmS allocation for $I$.
\item If an MmS allocation for $I$ exists, then executing algorithm \ref{alg:makespan_chores_approx} for $-I$ and $\varepsilon>0$ gives a PTAS for the computation of an MmS allocation for $I$.
\end{enumerate}
\end{corollary}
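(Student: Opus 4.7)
The plan is to combine Theorem \ref{PTAS_maxmin_chores} with two auxiliary observations: that Algorithm \ref{alg:makespan_chores_approx} runs in polynomial time when $N$ is fixed, and that appropriate choices of the parameter $\lambda$ in that theorem yield each of the two claims of the corollary.

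First I would verify the polynomial runtime under the hypothesis that $N$ is fixed. Steps 1 and 3 are trivially polynomial, so the only concern lies in Steps 2 and 4. In Step 2, computing $c_i$ with $mMS_{u_i}^N(\mathcal{M}) \le c_i \le \alpha \cdot mMS_{u_i}^N(\mathcal{M})$ is exactly an $\alpha$-approximation of $P/C_{max}$ on $N$ identical machines with processing times $p_j := u_i(j)$; this is polynomial for any fixed $\alpha > 1$ by the PTAS of Hochbaum and Shmoys. In Step 4, one must $\beta$-approximate $R/C_{max}$ on $N$ machines, which is handled in polynomial time by the PTAS of Lenstra, Shmoys and Tardos whenever both $N$ and $\beta$ are fixed. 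Choosing $\alpha$ and $\beta$ to depend only on $\varepsilon$ (for instance $\alpha = \beta = \sqrt{1+\varepsilon}$) ensures $\alpha\beta < 1+\varepsilon$ and yields overall polynomial runtime.

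For claim (1), I would take $\lambda := \lambda^{I}$. By the definition of $\lambda^{I}$ for non-positive instances, a solution to the $\lambda^I$-max-min problem for $I$ exists, so the hypothesis of Theorem \ref{PTAS_maxmin_chores} is met. The theorem then guarantees that $S^\varepsilon$ is a solution of the $((1+\varepsilon)\cdot\lambda^I)$-max-min problem, i.e., $d_i(S^\varepsilon_i) \ge (1+\varepsilon)\cdot \lambda^I \cdot MmS_{d_i}^N(\mathcal{M})$ for every $i \in [N]$. Since $MmS_{d_i}^N(\mathcal{M}) \le 0$, this is within a multiplicative factor $(1+\varepsilon)$ of the optimal MmS guarantee $\lambda^I \cdot MmS_{d_i}^N(\mathcal{M})$, which is precisely the PTAS property. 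For claim (2), the existence of an MmS allocation implies a solution to the $1$-max-min problem exists, so applying Theorem \ref{PTAS_maxmin_chores} with $\lambda := 1$ directly yields an allocation $S^\varepsilon$ with $d_i(S^\varepsilon_i) \ge (1+\varepsilon) \cdot MmS_{d_i}^N(\mathcal{M})$ for all $i$, a $(1+\varepsilon)$-approximate MmS allocation.

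The main obstacle I expect is not any deep new argument but rather the careful bookkeeping in Step 2: one must identify the computation of $mMS_{u_i}^N(\mathcal{M})$ with the optimal makespan of $P/C_{max}$ on $N$ machines (so that the Hochbaum--Shmoys PTAS applies), and observe that the additive approximation slack $\alpha$ in the bound $c_i \le \alpha \cdot mMS_{u_i}^N(\mathcal{M})$ propagates multiplicatively into the final guarantee exactly as $\alpha\beta \le 1+\varepsilon$ in Theorem \ref{PTAS_maxmin_chores}. Once these runtime ingredients are in place, both parts of the corollary follow immediately from Theorem \ref{PTAS_maxmin_chores}.
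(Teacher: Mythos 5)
Your proposal is correct and follows essentially the same route as the paper: both parts are obtained by instantiating Theorem~\ref{PTAS_maxmin_chores} (with $\lambda=\lambda^{I}$ for part~1) together with the observation that Algorithm~\ref{alg:makespan_chores_approx} runs in polynomial time for fixed $N$ via the Hochbaum--Shmoys and Lenstra--Shmoys--Tardos PTASes. The only cosmetic difference is in part~2, where you invoke the theorem with $\lambda=1$ while the paper instead notes that an (approximately) optimal MmS allocation is already an (approximate) MmS allocation when one exists; since $\lambda^{I}\le 1$ and $MmS_{d_i}^N(\mathcal{M})\le 0$ in that case, the two arguments yield the same guarantee.
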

\begin{proof}
Set $\lambda=\lambda^{I}$ in \ref{PTAS_maxmin_chores} for the first result. The second statement follows since each optimal MmS allocation for $I$ is also an MmS allocation for $I$ if the latter exists.
\end{proof}

This is a strong result since it gives a PTAS for the computation of an optimal MmS allocation of a given chore instance, no matter if an MmS allocation exists or not (assuming the number of agents is fixed). In addition, if an MmS allocation for a chore instance exists, we have a PTAS (assuming the number of agent is fixed) to compute an MmS allocation. An analogous result for goods will be obtained in the next section.

\section{Exact Algorithm and PTAS for Optimal MmS Fairness (Goods)} 
In this section, we present a PTAS for finding an optimal MmS allocation for goods when the number of agents is fixed. The PTAS is based on the following exact algorithm. The techniques for the proofs are basically the same as in the previous section.

\begin{algo}\label{alg:makespan_goods}
	Given a non-negative instance $(\mathcal{M},[N],(u_i)_{i\in[N]})\in\mathcal{I}^+$, we state an algorithm consisting of the following steps.
	\begin{enumerate}[topsep=0pt,itemsep=-1ex,partopsep=1ex,parsep=1ex]
		\item Compute $c_i:=MmS_{u_i}^N(\mathcal{M})$ for each agent $i\in[N]$ and set 
		$$[N]_{>0}:=\{i\in[N]|c_i> 0\}.$$
		\item If $[N]_{>0}=\emptyset$, set $\lambda^\ast:=\infty$ and choose an arbitrary allocation $S^\ast\in\Pi_N(\mathcal{M})$. Terminate the algorithm.
		\item Define new additive utility functions
		$$u'_i:\mathcal{M}\to\mathbb{R}_{\ge 0},\quad j\mapsto \frac{1}{c_i}\cdot u_i(j)$$
		for all $i\in[N]_{>0}$.
		\item Solve the $R/C_{min}$ problem on $|[N]_{>0}|$ machines where each machine represents one agent $i\in [N]_{>0}$ and the processing times are defined as $p_{ij}:=u'_i(j)$ for all $i\in[N]_{>0}$ and $j\in\mathcal{M}$. Denote the optimal objective function value by $\lambda^\ast$ and the corresponding allocation by $(S_i^\ast)_{i\in[N]_{>0}}\in\Pi_{|[N]_{>0}|}(\mathcal{M})$. Set $S_i^\ast:=\emptyset$ for all $i\in [N]\backslash [N]_{>0}$ and $S^\ast:=(S_1^\ast,\ldots,S_N^\ast)\in\Pi_N(\mathcal{M})$.	
	\end{enumerate}
\end{algo}
Note that this algorithm aims at finding an optimal MmS allocation by maximizing egalitarian welfare according to the new utility functions $u_i'$.

\begin{proposition}\label{optimal_mms_goods}
	Execute Algorithm~\ref{alg:makespan_goods} for a non-negative instance $I=(\mathcal{M},[N],(u_i)_{i\in[N]})\in\mathcal{I}^+$. Then we have $\lambda^\ast=\lambda^{I}$ and $S^\ast$ is an optimal MmS allocation for $I$.
\end{proposition}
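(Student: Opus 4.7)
The plan is to mirror the structure of the proof for the chores version (Algorithm \ref{alg:makespan_chores}) while taking care of the extra subtlety that in the goods setting, $c_i=MmS^N_{u_i}(\mathcal{M})=0$ does \emph{not} imply $u_i\equiv 0$. This is exactly why the algorithm restricts the $R/C_{min}$ instance to the index set $[N]_{>0}$ and assigns the empty bundle to every agent outside it. I will treat the two cases $[N]_{>0}=\emptyset$ and $[N]_{>0}\neq\emptyset$ separately.

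First, suppose $[N]_{>0}=\emptyset$, i.e.\ $c_i=0$ for every $i\in[N]$. By Lemma~\ref{optimal_mms_bounds}(1) this is precisely the case in which $\lambda^{I}=\infty$, matching the value $\lambda^{\ast}$ returned by the algorithm. Using the convention $\infty\cdot 0:=0$, every allocation $S^{\ast}\in\Pi_N(\mathcal{M})$ satisfies $u_i(S_i^{\ast})\ge 0=\lambda^{\ast}\cdot MmS^N_{u_i}(\mathcal{M})$ for all $i\in[N]$ (since $u_i\ge 0$), so the arbitrarily chosen $S^{\ast}$ is indeed an optimal MmS allocation.

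Now assume $[N]_{>0}\neq\emptyset$. I will prove $\lambda^{\ast}=\lambda^{I}$ by showing the two inequalities separately, which together with the constraints achieved by the $R/C_{min}$ solution give the claim on $S^{\ast}$. For $\lambda^{\ast}\le\lambda^{I}$: the $R/C_{min}$ solution $(S_i^{\ast})_{i\in[N]_{>0}}$ partitions $\mathcal{M}$ and satisfies $u_i'(S_i^{\ast})\ge\lambda^{\ast}$ for every $i\in[N]_{>0}$. Multiplying by $c_i$ gives $u_i(S_i^{\ast})\ge\lambda^{\ast}\cdot MmS^N_{u_i}(\mathcal{M})$ for those agents, while for $i\in[N]\setminus[N]_{>0}$ the bound $u_i(\emptyset)=0\ge\lambda^{\ast}\cdot 0$ is trivial. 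Hence the completed allocation $S^{\ast}$ solves the $\lambda^{\ast}$-max-min problem for $I$, so $\lambda^{I}\ge\lambda^{\ast}$.

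For the converse $\lambda^{\ast}\ge\lambda^{I}$: take any solution $(T_1,\ldots,T_N)\in\Pi_N(\mathcal{M})$ of the $\lambda^{I}$-max-min problem for $I$. Fix some $i_0\in[N]_{>0}$ and form a new partition $(T_i')_{i\in[N]_{>0}}$ of $\mathcal{M}$ by transferring all items in $\bigcup_{i\notin[N]_{>0}} T_i$ into $T_{i_0}$, leaving $T_i'=T_i$ for the remaining indices $i\in[N]_{>0}\setminus\{i_0\}$. Since $u_{i_0}$ is additive and non-negative, $u_{i_0}(T_{i_0}')\ge u_{i_0}(T_{i_0})\ge\lambda^{I}\cdot c_{i_0}$, and the inequalities $u_i(T_i')\ge\lambda^{I}\cdot c_i$ for the other $i\in[N]_{>0}$ are inherited from the original allocation. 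Dividing through by $c_i>0$ yields $u_i'(T_i')\ge\lambda^{I}$ for every $i\in[N]_{>0}$, so $(T_i')$ is feasible for the $R/C_{min}$ instance with objective at least $\lambda^{I}$, giving $\lambda^{\ast}\ge\lambda^{I}$. Combining both inequalities produces $\lambda^{\ast}=\lambda^{I}$, and the first step of the case analysis already showed that $S^{\ast}$ achieves $u_i(S_i^{\ast})\ge\lambda^{\ast}\cdot MmS^N_{u_i}(\mathcal{M})$ for every $i\in[N]$; therefore $S^{\ast}$ is an optimal MmS allocation for $I$. The only non-routine step is the reassignment argument in the last paragraph, which needs additivity and non-negativity of $u_{i_0}$ to ensure that dumping the unassigned items into a single bundle in $[N]_{>0}$ cannot violate the constraint for agent $i_0$.
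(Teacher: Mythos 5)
Your proof is correct and follows essentially the same route as the paper's: dispose of the degenerate case $[N]_{>0}=\emptyset$ via Lemma~\ref{optimal_mms_bounds}, and otherwise identify $\lambda^\ast$ with the maximal $\lambda$ for which the $\lambda$-max-min problem for $I$ is solvable. The only difference is that you spell out the reassignment step (moving the items held by agents outside $[N]_{>0}$ onto a single agent in $[N]_{>0}$, using additivity and non-negativity) needed for the direction $\lambda^\ast\ge\lambda^{I}$, which the paper's ``in summary'' sentence leaves implicit.
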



There are again two steps in Algorithm~\ref{alg:makespan_goods} that are exponential in time. First, each computation of $MmS_{u_i}^N(\mathcal{M})$ may require exponential time and second, finding an optimal solution to $R/C_{min}$ may require exponential time. The computation of $MmS^N_{u_i}(\mathcal{M})$ for an agent $i\in[N]$ is equivalent to the computation of a job partition that maximizes the minimum finishing time on $N$ identical parallel machines ($P/C_{min}$) where the processing time of a job $j\in\mathcal{M}$ is defined as $p_j:=u_i(j)$.

\citet{Woeginger1997} present a PTAS for $P/C_{min}$ and \citet{Efraimidis2006} present a PTAS for $R_N/C_{min}$ (which means that the number of agents is fixed to $N$).\footnote{A PTAS for each fixed number of agents implies also a PTAS for each bounded number of at most $N$ agents.} This implies that we can run the following algorithm in polynomial time when the number of agents is fixed.

\begin{algo}\label{alg:makespan_goods_approx}
	Given a non-negative instance $I=(\mathcal{M},[N],(u_i)_{i\in[N]})\in\mathcal{I}^+$ and an $\varepsilon>0$, we state an algorithm consisting of the following steps.
	\begin{enumerate}[topsep=0pt,itemsep=-1ex,partopsep=1ex,parsep=1ex]
		\item Select $0<\alpha<1$ and $0<\beta<1$ with $\alpha\beta>1-\varepsilon$.
		\item Compute $c_i$ with $\alpha\cdot MmS_{u_i}^N(\mathcal{M}) \le c_i \le MmS_{u_i}^N(\mathcal{M})$ for each agent $i\in[N]$. Define a set $[N]_{>0}:=\{i\in[N]|c_i>0\}$.
		\item If $[N]_{>0}=\emptyset$, set $\lambda^\ast:=\infty$ and choose an arbitrary allocation $S^\varepsilon\in\Pi_N(\mathcal{M})$. Terminate the algorithm.
		\item Define new additive utility functions
		$$u'_i:\mathcal{M}\to\mathbb{R}_{\ge 0},\quad j\mapsto \frac{1}{c_i}\cdot u_i(j)$$
		for all $i\in[N]_{>0}$.
		\item Consider the corresponding $R/C_{min}$ problem on $|N_{>0}|$ machines where each machine represents one agent $i\in[N]_{>0}$ and the processing times are defined as $p_{ij}:=u'_i(j)$ for all $i\in[N]_{>0}$ and $j\in\mathcal{M}$. Denote the optimal objective function value by $\lambda^\ast$. Compute an approximate solution $(S^\varepsilon_i)_{i\in[N]_{>0}}\in\Pi_{|[N]_{>0}|}(\mathcal{M})$
		with $u_i'(S^\varepsilon_i)\ge\beta\lambda^\ast$ for all $i\in[N]_{>0}$. Set $S_i^\varepsilon:=\emptyset$ for all $i\in[N]\backslash[N]_{>0}$ and $S^\epsilon:=(S_1,\ldots,S_N)\in\Pi_N(\mathcal{M})$.
	\end{enumerate}
\end{algo}
\begin{theorem}\label{PTAS_maxmin_goods}
	Let a non-negative instance $I=(\mathcal{M},[N],(u_i)_{i\in[N]})\in\mathcal{I}^+$,  $\lambda\ge0$, and $\varepsilon>0$ be given. Assume that a solution of the $\lambda$-max-min problem for $I$ exists and execute Algorithm \ref{alg:makespan_goods_approx} for the pair $(I,\varepsilon)$. Then $S^\epsilon$ is a solution of the $((1-\varepsilon)\cdot\lambda)$-max-min problem for $I$.
\end{theorem}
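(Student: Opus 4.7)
The plan is to mirror the structure of the proof of Theorem \ref{PTAS_maxmin_chores}, but since both the input and the target are non-negative (goods) instances, no invocation of Proposition \ref{transfer_lambda} is required, and all inequalities flip direction (lower bounds instead of upper bounds). The argument will consist of three basic moves: (i) use the hypothesis to produce a lower bound $\lambda^{\ast} \ge \lambda$ on the optimum of the rescaled $R/C_{min}$ instance built by Algorithm \ref{alg:makespan_goods_approx}; (ii) invoke the PTAS guarantee to obtain $u_i'(S_i^{\varepsilon}) \ge \beta \lambda^{\ast} \ge \beta \lambda$ for each $i \in [N]_{>0}$; (iii) rescale back via $c_i \ge \alpha \cdot MmS_{u_i}^N(\mathcal{M})$ together with $\alpha\beta > 1-\varepsilon$ to obtain $u_i(S_i^{\varepsilon}) \ge \alpha\beta\lambda \cdot MmS_{u_i}^N(\mathcal{M}) \ge (1-\varepsilon)\lambda \cdot MmS_{u_i}^N(\mathcal{M})$.

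For step (i), the hypothesis supplies a partition $(T_1,\ldots,T_N)\in\Pi_N(\mathcal{M})$ with $u_i(T_i) \ge \lambda \cdot MmS_{u_i}^N(\mathcal{M})$ for every agent. Dividing by $c_i$ (using $c_i \le MmS_{u_i}^N(\mathcal{M})$) yields $u_i'(T_i) \ge \lambda$ for every $i \in [N]_{>0}$. There is, however, a mismatch: $T$ is a partition over all $N$ agents, while the $R/C_{min}$ instance in Algorithm \ref{alg:makespan_goods_approx} uses only the $|[N]_{>0}|$ restricted machines. I would resolve this by redistributing the items that $T$ assigns to agents outside $[N]_{>0}$ arbitrarily among the agents in $[N]_{>0}$; by additivity and non-negativity of each $u_i'$, this redistribution can only increase $u_i'(T_i)$ for $i\in[N]_{>0}$, so the resulting partition of $\mathcal{M}$ over $[N]_{>0}$ witnesses $\lambda^{\ast} \ge \lambda$ for the restricted instance.

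The main obstacle, and indeed the only non-algebraic one, is this bookkeeping around $[N]_{>0}$: I must argue carefully that the existence of a good allocation on all $N$ agents still yields a good allocation on the restricted set of machines, and, dually, that the agents in $[N]\setminus[N]_{>0}$ cause no trouble. For the latter, if $c_i = 0$ then $\alpha \cdot MmS_{u_i}^N(\mathcal{M}) \le c_i = 0$ forces $MmS_{u_i}^N(\mathcal{M}) = 0$, so the required bound $u_i(S_i^{\varepsilon}) \ge (1-\varepsilon)\lambda \cdot MmS_{u_i}^N(\mathcal{M}) = 0$ holds vacuously. The degenerate case $[N]_{>0} = \emptyset$ is handled identically: the algorithm terminates at its early-exit step with an arbitrary allocation, and the bound is vacuous at every coordinate.

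Finally, step (iii) is a direct chain of inequalities. For $i \in [N]_{>0}$, multiplying $u_i'(S_i^{\varepsilon}) \ge \beta\lambda$ through by $c_i$ gives $u_i(S_i^{\varepsilon}) \ge c_i \beta \lambda \ge \alpha \beta \lambda \cdot MmS_{u_i}^N(\mathcal{M}) \ge (1-\varepsilon)\lambda \cdot MmS_{u_i}^N(\mathcal{M})$, where the middle inequality uses $c_i \ge \alpha \cdot MmS_{u_i}^N(\mathcal{M})$ and the last uses the choice $\alpha\beta > 1-\varepsilon$. Combined with the vacuous bound on $[N]\setminus[N]_{>0}$, this shows $S^{\varepsilon}$ solves the $((1-\varepsilon)\lambda)$-max-min problem for $I$, as required.
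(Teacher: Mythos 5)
Your proof is correct and follows essentially the same route as the paper's: establish $\lambda^\ast\ge\lambda$ from the hypothesized allocation, apply the $\beta$-approximation guarantee on the rescaled $R/C_{min}$ instance, and rescale back using $c_i\ge\alpha\cdot MmS_{u_i}^N(\mathcal{M})$ and $\alpha\beta>1-\varepsilon$. Your explicit redistribution of the items that $T$ assigns to agents outside $[N]_{>0}$ is a point the paper's proof glosses over, and your handling of the degenerate cases matches the paper's.
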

 

Since Algorithm~\ref{alg:makespan_goods_approx} runs in polynomial time when the number of agents is fixed, this general result gives us the following important corollary.

\begin{corollary}
	Let the number of agents be fixed to $N$ and let $I\in\mathcal{I}^+$ be a non-negative instance with $N$ agents.
\begin{enumerate}[topsep=0pt,itemsep=-1ex,partopsep=1ex,parsep=1ex]
	\item Algorithm \ref{alg:makespan_goods_approx} gives a PTAS for the computation of an optimal MmS allocation for $I$.
	\item If an MmS allocation for $I$ exists, then Algorithm~\ref{alg:makespan_goods_approx} gives a PTAS for the computation of an MmS allocation for $I$.
\end{enumerate}	
\end{corollary}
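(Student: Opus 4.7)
The plan is to reduce the corollary to Theorem~\ref{PTAS_maxmin_goods} together with a runtime analysis of Algorithm~\ref{alg:makespan_goods_approx}. First, I would verify that, for any fixed $\varepsilon>0$ and any fixed $N$, every step of Algorithm~\ref{alg:makespan_goods_approx} runs in polynomial time. Steps 1, 3, and the bookkeeping in steps 2 and 5 are clearly polynomial. For step 2, computing $c_i$ with $\alpha\cdot MmS^N_{u_i}(\mathcal{M})\le c_i\le MmS^N_{u_i}(\mathcal{M})$ for a constant $\alpha<1$ reduces to the $P/C_{min}$ problem (as noted just before the algorithm), for which \citet{Woeginger1997} provides a PTAS; taking $c_i$ to be the value of the Woeginger PTAS (for approximation factor $\alpha$) gives the required bound in polynomial time. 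For step 5, the $R/C_{min}$ instance has $|[N]_{>0}|\le N$ machines, a fixed quantity, so the PTAS of \citet{Efraimidis2006} for $R_N/C_{min}$ produces the required $\beta$-approximation in polynomial time for any fixed $\beta<1$. Hence the whole algorithm is polynomial-time when $N$ is fixed.

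Second, for part~(1), I would apply Theorem~\ref{PTAS_maxmin_goods} with the specific choice $\lambda:=\lambda^{I}$. By the definition of $\lambda^{I}$, a solution of the $\lambda^{I}$-max-min problem for $I$ always exists, so the hypothesis of the theorem is satisfied. The conclusion then states that $S^\varepsilon$ is a solution of the $((1-\varepsilon)\cdot \lambda^{I})$-max-min problem for $I$, i.e.
\[
u_i(S^\varepsilon_i)\ge (1-\varepsilon)\cdot \lambda^{I}\cdot MmS^N_{u_i}(\mathcal{M})\quad\text{for all }i\in[N].
\]
Since by definition an optimal MmS allocation is precisely one satisfying $u_i(S_i)\ge \lambda^{I}\cdot MmS^N_{u_i}(\mathcal{M})$, this is exactly a $(1-\varepsilon)$-approximation of an optimal MmS allocation, so combined with the polynomial runtime we get a PTAS.

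Third, for part~(2), suppose an MmS allocation for $I$ exists. Then by definition $\lambda^{I}\ge 1$, so in particular a solution of the $1$-max-min problem for $I$ exists (any MmS allocation is such a solution). Applying Theorem~\ref{PTAS_maxmin_goods} with $\lambda:=1$ yields that $S^\varepsilon$ is a solution of the $(1-\varepsilon)$-max-min problem for $I$, i.e.\ $u_i(S^\varepsilon_i)\ge (1-\varepsilon)\cdot MmS^N_{u_i}(\mathcal{M})$ for all $i$. This is a $(1-\varepsilon)$-approximation to MmS fairness, and together with the polynomial runtime established above we obtain a PTAS for MmS allocations.

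The argument is essentially a bookkeeping exercise once Theorem~\ref{PTAS_maxmin_goods} is in hand; the only real content is the runtime claim, and the main (mild) obstacle is citing the correct scheduling PTAS for each exponential-time step and confirming that feeding a PTAS output as the $c_i$ value in step 2 genuinely yields the bracket $\alpha\cdot MmS^N_{u_i}(\mathcal{M})\le c_i\le MmS^N_{u_i}(\mathcal{M})$ needed by Theorem~\ref{PTAS_maxmin_goods}.
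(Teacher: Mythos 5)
Your proposal is correct and follows essentially the same route as the paper: the paper's (omitted) proof mirrors its chores counterpart, namely instantiating Theorem~\ref{PTAS_maxmin_goods} with $\lambda=\lambda^{I}$ for part~(1) and deducing part~(2) from the fact that $\lambda^{I}\ge 1$ when an MmS allocation exists, while the polynomial runtime is justified exactly as you do, via the PTAS of \citet{Woeginger1997} for $P/C_{min}$ in step~2 and that of \citet{Efraimidis2006} for $R_N/C_{min}$ in step~5. Your choice of $\lambda=1$ directly in part~(2) is an inessential variant of the paper's derivation of~(2) from~(1).
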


\section{Conclusions}
We initiated work on MmS allocation of chores, proposed a new fairness concept called optimal MmS, and presented interesting connections and differences between fair allocation of goods and chores. For a fixed number of agents, we proposed {compelling} approximation algorithms for fair allocation with respect to MmS for both chores and goods. For goods, we provided a connection between computation of MmS allocations and egalitarian welfare maximizing allocations. 
{There is much potential for further work on the fair allocation
  of chores 
as this} has largely been overlooked in contrast to the case of goods.
{One interesting direction are problems containing a mixture of goods and chores}. For such settings, an approximate MmS allocation will require approximation guarantees on either side depending on whether the MmS guarantee is positive or negative.


\section*{Appendix}

\appendix

\section{Proof of Proposition \ref{non_existence_chores}}
\label{sec:non_existence_chores_proof}
The matrix $O$ gets the following labeling:
$$\begin{pmatrix}
^\alpha 17^1_+ &^\alpha 25^1_-&^\beta 12^1_+&^\gamma 1^1_\ast\\
^\alpha2^2_-&^\beta22^2_\ast&^\gamma 3^2_+&^\gamma28^2_-\\
^\alpha11^3_\ast&^\beta0^3_-&^\beta21^3_\ast&^\gamma 23^3_+
\end{pmatrix}$$

As \citet{PrWa14a} point out, each 4 elements in $O$ sum up to 55 if and only if they have a common label. Suppose agent 1 divides the items along the labels $1,2,3$, agent 2 divides the items along the labels $\alpha,\beta,\gamma$ and agent $3$ divides the items along the labels $+,-,\ast$, then every agent has perfectly balanced bundles where the utility is always $4,055,000$. Therefore, we can conclude
\begin{align*}
mMS_{u_1}^3(\mathcal{M})  &=mMS_{u_2}^3(\mathcal{M})=mMS_{u_3}^3(\mathcal{M})\\
&=4,055,000.
\end{align*}

Consider an arbitrary allocation $(S_1,S_2,S_3)\in\Pi_3(\mathcal{M})$. We will show that there is always an agent $i\in[3]$ with $u_i(S_i)>4,055,000=mMS_{u_i}^3(\mathcal{M})$.

If there is an agent $i\in[3]$ which receives at least five items, we have $u_i(S_i)>5,000,000>4,055,000$. Consequently, we can focus on the case where each agent receives exactly four items. If there is an agent $i\in[3]$ which receives items with a sum of more than $55$ in the O-matrix, we have $u_i(S_i)>4,055,096>4,055,000$. Consequently, we can focus on the case where all agents get a bundle with common labels (i.e., with a sum of exactly 55 in the $O$-matrix).

If the items are divided along the labels $1,2,3$ then agent 2 or agent 3 receive items labeled by 2 or 3 giving them a utility of $4,055,001$. If the items are divided along the labels $\alpha,\beta,\gamma$ then agent 1 or agent 3 receive items labeled by $\beta$ or $\gamma$ giving them a utility of at least $4,055,001$. If the items are divided along the labels $+,-,\ast$ then agent 1 or agent 2 receive items labeled by $-$ or $\ast$ giving them a utility of at least $4,055,001$. 

We showed that for an arbitrary allocation $(S_1,S_2,S_3)\in\Pi_3(\mathcal{M})$ there is always an agent $i\in[3]$ with $u_i(S_i)>4,055,000=mMS_{u_i}^3(\mathcal{M})$ which means that there is no perverse mMS allocation for the instance $I$. The result follows from \ref{transfer_goods_chores}.\hfill$\Box$

\section{Proof of Proposition \ref{non_transferability}}
\label{sec:non_transferability_proof}
We define 
$$S_i:=\{(i,k)|k=1,2,3,4\}$$
for all $i\in[3]$. 

As pointed out in Appendix~\ref{sec:non_existence_chores_proof}, we can achieve a perfectly balanced partition with $$MmS_{u_i}^3(\mathcal{M})=mMS_{u_i}^3(\mathcal{M})=4,055,000$$
for all $i\in[3]$. Furthermore, we have $u_1(S_1)=4,055,000$ and $u_2(S_2)=u_3(S_3)=4,055,001$, which means
$$u_i(S_i)\ge MmS_{u_i}^3(\mathcal{M})$$
for all $i\in[3]$ and therefore $(S_1,S_2,S_3)$ is an MmS allocation for $\mathcal{I}$. The non-existence of an MmS allocation for $-I$ was shown in \ref{non_existence_chores}.

The utility functions $w_i$ for each $i\in[3]$ are defined by 
$$w_i((j,k)):=10^6\cdot B_{jk}+10^3\cdot O_{jk} - E_{jk}^{i}.$$

This implies $w_1(S_1)=4,055,000$ and $w_2(S_2)=w_3(S_3)=4,054,999$. As \citet{PrWa14a} point out, we have $mMS_{w_i}^3(\mathcal{M})=4,055,000$ for all $i\in[3]$, which means that $(S_1,S_2,S_3)$ is a perverse mMS allocation for $J$ and therefore an MmS allocation for $-J$ by \ref{transfer_goods_chores}. The non-existence of an MmS allocation for $J$ was shown by \citet{PrWa14a}.\hfill$\Box$

\section{Proof of Proposition \ref{complexity_goods_chores}}
\label{sec:complexity_goods_chores_proof}
We show the strong NP-hardness by a reduction from 3-partition. We consider a setting with numbers $N,W\in\mathbb{N}$, a set $\mathcal{M}=\{1,2,\ldots,3N\}$ of $3N$ elements, and an additive valuation function $u:\mathcal{M}\to\mathbb{R}_{>0}$ such that
$\frac{W}{4}<u(j)<\frac{W}{2}$ for all $j=1,\ldots,3N$ and $\sum_{j=1}^{3N}u(j)=NW$. Question: Can $\mathcal{M}$ can be partitioned into $N$ disjoint subsets where the total valuation of the elements in each subset is $W$? This decision problem is a strongly NP-complete restricted version of the $3$-partition problem \cite{Garey1979}.

The computation of an MmS (resp. perverse mMS) allocation for the corresponding instance $(\mathcal{M},[N],(u)_{i\in[N]})$
where each agent has the same utility function $u$ is equivalent to the computation of an MmS (resp. perverse mMS) partition for any single agent.

But this would answer the above mentioned strongly NP-complete decision problem and therefore the computation of an MmS allocation for both goods and chores (by \ref{transfer_goods_chores}) - if it even exists - is strongly NP-hard.

Let us now fix the number of agents to $2$. We consider a number $M\in\mathbb{N}$, a set $\{1,2,\ldots,M\}$ and an additive valuation function $u:\mathcal{M}\to\mathbb{R}_{>0}$. Question: Can $\mathcal{M}$ can be partitioned into $2$ disjoint subsets where the total valuation of the elements in both subsets is the same? This is a general instance of the integer partition problem which is NP-complete \cite{Garey1979}.

The computation of an MmS (resp. perverse mMS) allocation for the corresponding instance $(\mathcal{M},[2],(u)_{i\in[2]})$
where both agents have the same utility function $u$ is equivalent to the computation of an MmS (resp. perverse mMS) partition for any single agent.

But this would answer the NP-complete integer partition decision problem and therefore the computation of an MmS allocation for both goods and chores (by \ref{transfer_goods_chores}) - if it even exists - is NP-hard for $N=2$.\hfill$\Box$

\section{Proof of Proposition \ref{optimal_mms_goods}}
\label{sec:optimal_mms_goods_proof}
If $c_i=0$ for all $i\in[N]$ then we have $\lambda^\ast=\lambda^{I}=\infty$ by \ref{optimal_mms_bounds}. So let us assume that $c_i>0$ for at least one $i\in[N]$. By construction, $\lambda^\ast$ is the maximal $\lambda\ge0$ for which an allocation $(S_i)_{i\in[N]_{>0}}\in\Pi_{|N_{>0}|}(\mathcal{M})$ with $u_i'(S_i)\ge\lambda$ for all $i\in[N]_{>0}$ exists. If $c_i=0$ for an agent $i\in[N]$ then even
$$u_i(\emptyset)\ge\lambda\cdot MmS_{u_i}^N(\mathcal{M})$$
holds true for all $\lambda\in[0,\infty)$.

In summary, $\lambda^\ast$ is the maximal $\lambda\ge0$ for which an allocation $(S_i)_{i\in[N]}\in\Pi_{N}(\mathcal{M})$ with $u_i(S_i)\ge\lambda\cdot MmS_{u_i}^N(\mathcal{M})$ for all $i\in[N]$ exists and therefore $\lambda^\ast=\lambda^{I}$.\hfill$\Box$
	
\section{Proof of theorem \ref{PTAS_maxmin_goods}}
\label{sec:PTAS_maxmin_goods_proof}
If $c_i=0$ for all $i\in[N]$, there is nothing to show. So let us assume $[N]_{>0}\neq\emptyset$. Since we know that a solution of the $\lambda$-max-min problem for $I$ exists, there must be an allocation $(S_1,\ldots,S_N)\in\Pi_N(\mathcal{M})$ with $u_i(S_i)\ge \lambda\cdot MmS_{u_i}^N(\mathcal{M})\ge \lambda\cdot c_i$ for all $i\in[N]$. From this, we have $u_i'(S_i)\ge \lambda$ for all $i\in[N]_{>0}$, which implies $\lambda\le\lambda^\ast$.
		
		Therefore, we have
		$$\min_{i\in[N]_{>0}} \frac{u_i(S^\varepsilon_i)}{c_i}=\min_{i\in[N]_{>0}} u_i'(S^\varepsilon_i)\ge \beta\lambda^\ast\ge\beta \lambda.$$
		
		Since $c_i\ge\alpha\cdot mMS_{u_i}^N(\mathcal{M})$ for all $i\in[N]$, this leads us to
		$$\min_{i\in[N]_{>0}}\frac{u_i(S^\varepsilon_i)}{MmS_{u_i}^N(\mathcal{M})} \ge \alpha\beta \lambda \ge (1-\varepsilon)\cdot \lambda.$$
		
		But this means nothing else than 
		$$u_i(S^\varepsilon_i) \ge (1-\varepsilon)\cdot\lambda\cdot MmS_{u_i}^N(\mathcal{M})$$
		for all $i\in[N]_{>0}$ which holds clearly also for $i\in[N]\backslash[N]_{>0}$.\hfill$\Box$

\bibliographystyle{named}

\begin{thebibliography}{}

 \bibitem[\protect\citeauthoryear{Amanatidis \bgroup \em et al.\egroup
   }{2015}]{AMNS15a}
 G.~Amanatidis, E.~Markakis, A.~Nikzad, and A.~Saberi.
 \newblock Approximation algorithms for computing maximin share allocations.
 \newblock In {\em Proceedings of the 35th International Colloquium on Automata,
   Languages, and Programming (ICALP)}, pages 39--51, 2015.

 \bibitem[\protect\citeauthoryear{Asadpour and Saberi}{2010}]{AsSa10a}
 A.~Asadpour and A.~Saberi.
 \newblock An approximation algorithm for max-min fair allocation of indivisible
   goods.
 \newblock {\em SIAM Journal on Computing}, 39(7):2970--2989, 2010.

 \bibitem[\protect\citeauthoryear{Aziz \bgroup \em et al.\egroup
   }{2015}]{AGMW15a}
 H.~Aziz, S.~Gaspers, S.~Mackenzie, and T.~Walsh.
 \newblock Fair assignment of indivisible objects under ordinal preferences.
 \newblock {\em Artificial Intelligence}, 227:71--92, 2015.

 \bibitem[\protect\citeauthoryear{Bansal and Sviridenko}{2006}]{BaSv06a}
 N.~Bansal and M.~Sviridenko.
 \newblock The santa claus problem.
 \newblock In {\em Proceedings of the 38th Annual ACM Symposium on Theory of
   Computing (STOC)}, pages 31--40. ACM Press, 2006.

 \bibitem[\protect\citeauthoryear{Bouveret and Lema{\^\i}tre}{2014}]{BoLe14a}
 S.~Bouveret and M.~Lema{\^\i}tre.
 \newblock Characterizing conflicts in fair division of indivisible goods using
   a scale of criteria.
 \newblock In {\em Proceedings of the 13th International Conference on
   Autonomous Agents and Multi-Agent Systems (AAMAS)}, pages 1321--1328.
   IFAAMAS, 2014.

 \bibitem[\protect\citeauthoryear{Bouveret and Lema{\^\i}tre}{2015}]{Bole15a}
 S.~Bouveret and M.~Lema{\^\i}tre.
 \newblock Characterizing conflicts in fair division of indivisible goods using
   a scale of criteria.
 \newblock {\em Autonomous Agents and Multi-Agent Systems}, 2015.

 \bibitem[\protect\citeauthoryear{Bouveret \bgroup \em et al.\egroup
   }{2015}]{BCM15a}
 S.~Bouveret, Y.~Chevaleyre, and N.~Maudet.
 \newblock Fair allocation of indivisible goods.
 \newblock In F.~Brandt, V.~Conitzer, U.~Endriss, J.~Lang, and A.~D. Procaccia,
   editors, {\em Handbook of Computational Social Choice}, chapter~12. Cambridge
   University Press, 2015.

 \bibitem[\protect\citeauthoryear{Brams and Taylor}{1996}]{BrTa96a}
 S.~J. Brams and A.~D. Taylor.
 \newblock {\em Fair Division: From Cake-Cutting to Dispute Resolution}.
 \newblock Cambridge University Press, 1996.

 \bibitem[\protect\citeauthoryear{Budish}{2011}]{Budi11a}
 E.~Budish.
 \newblock The combinatorial assignment problem: Approximate competitive
   equilibrium from equal incomes.
 \newblock {\em Journal of Political Economy}, 119(6):1061--1103, 2011.

 \bibitem[\protect\citeauthoryear{Caragiannis \bgroup \em et al.\egroup
   }{2012}]{Caragiannis2012}
 I.~Caragiannis, C.~Kaklamanis, P.~Kanellopoulos, and M.~Kyropoulou.
 \newblock The efficiency of fair division.
 \newblock {\em Theory of Computing Systems}, 50(4):589--610, 2012.

 \bibitem[\protect\citeauthoryear{Efraimidis and
   Spirakis}{2006}]{Efraimidis2006}
 P.~S. Efraimidis and P.~G. Spirakis.
 \newblock Approximation schemes for scheduling and covering on unrelated
   machines.
 \newblock {\em Theoretical Computer Science}, 359(1):400--417, 2006.

 \bibitem[\protect\citeauthoryear{Garey and Johnson}{1979}]{Garey1979}
 Michael~R Garey and David~S Johnson.
 \newblock {\em Computers and intractability: a guide to the theory of
   NP-completeness. 1979}.
 \newblock Freeman \& Co, 1979.

 \bibitem[\protect\citeauthoryear{Goldman and Procaccia}{2014}]{GoPr14a}
 J.~Goldman and A.~D. Procaccia.
 \newblock Spliddit: Unleashing fair division algorithms.
 \newblock {\em ACM SIGecom Exchanges}, 13(2):41--46, 2014.

 \bibitem[\protect\citeauthoryear{Graham \bgroup \em et al.\egroup
   }{1979}]{Graham1979}
 R.~L. Graham, E.~L. Lawler, J.~K. Lenstra, and A.~H. G.~Rinnooy Kan.
 \newblock Optimization and approximation in deterministic sequencing and
   scheduling: A survey.
 \newblock {\em Annals of Discrete Mathematics}, 5:287--326, 1979.

 \bibitem[\protect\citeauthoryear{Hochbaum and Shmoys}{1987}]{Hochbaum1987}
 D.~S. Hochbaum and D.~B. Shmoys.
 \newblock Using dual approximation algorithms for scheduling problems
   theoretical and practical results.
 \newblock {\em Journal of the ACM (JACM)}, 34(1):144--162, 1987.

 \bibitem[\protect\citeauthoryear{Kurokawa \bgroup \em et al.\egroup
   }{2016}]{KPW16a}
 D.~Kurokawa, A.~D. Procaccia, and J.~Wang.
 \newblock When can the maximin share guarantee be guaranteed?
 \newblock In {\em Proceedings of the 30th AAAI Conference on Artificial
   Intelligence (AAAI)}. AAAI Press, 2016.

 \bibitem[\protect\citeauthoryear{Lenstra \bgroup \em et al.\egroup
   }{1990}]{Lenstra1990}
 J.~K. Lenstra, D.~B. Shmoys, and E.~Tardos.
 \newblock Approximation algorithms for scheduling unrelated parallel machines.
 \newblock {\em Mathematical programming}, 46(1-3):259--271, 1990.

 \bibitem[\protect\citeauthoryear{Lipton \bgroup \em et al.\egroup
   }{2004}]{LMMS04a}
 R.~J. Lipton, E.~Markakis, E.~Mossel, and A.~Saberi.
 \newblock On approximately fair allocations of indivisible goods.
 \newblock In {\em Proceedings of the 5th ACM Conference on Electronic Commerce
   (ACM-EC)}, pages 125--131. ACM Press, 2004.

 \bibitem[\protect\citeauthoryear{Peterson and Su}{1998}]{PeSu98a}
 E.~Peterson and F.~E. Su.
 \newblock Exact procedures for envy-free chore division.
 \newblock 1998.

 \bibitem[\protect\citeauthoryear{Pinedo}{2012}]{Pinedo2012}
 M.~L. Pinedo.
 \newblock {\em Scheduling: Theory, Algorithms, and Systems}, volume~4.
 \newblock Springer, 2012.

 \bibitem[\protect\citeauthoryear{Procaccia and Wang}{2014}]{PrWa14a}
 A.~D. Procaccia and J.~Wang.
 \newblock Fair enough: Guaranteeing approximate maximin shares.
 \newblock In {\em Proceedings of the 15th ACM Conference on Economics and
   Computation (ACM-EC)}, pages 675--692. ACM Press, 2014.

 \bibitem[\protect\citeauthoryear{Woeginger}{1997}]{Woeginger1997}
 G.~J. Woeginger.
 \newblock A polynomial-time approximation scheme for maximizing the minimum
   machine completion time.
 \newblock {\em Operations Research Letters}, 20(4):149--154, 1997.

 \end{thebibliography}

\end{document}